%\PassOptionsToPackage{table}{xcolor} 
\documentclass[a4paper,twocolumn,10pt,accepted=2025-07-16]{quantumarticle}
\pdfoutput=1

\usepackage{enumerate,appendix}
\usepackage{amsmath,amsthm,amssymb,bm}
%,commath}
\usepackage{color,graphicx}
\usepackage[usenames,dvipsnames,svgnames,table,cmyk,hyperref]{xcolor}
\usepackage[colorlinks]{hyperref}
\usepackage{optidef}
\hypersetup{
	colorlinks = true,
	urlcolor = {blue},
	citecolor = {blue},
	linkcolor= {blue}
}

\usepackage{graphicx}
\usepackage{amsmath}
\usepackage{latexsym}
\usepackage{bbm}

\usepackage{subcaption}
\usepackage{mathrsfs}

\def \be {\begin{equation}}
\def \ee {\end{equation}}

\newcommand{\tr}{\mathrm{Tr}}
\newcommand{\Tr}{\mathrm{Tr}}

\def \Re{\mathrm{Re}\,}
\def \Im{\mathrm{Im}\,}

\def \del{\partial}

\def \cX{{\cal X}}

\def \sofc2{{\cal S}({\mathbb C}^2)}

%%macro for sld and rld 

\def\>{\rangle}
\def\<{\langle}

\if0

\newcommand{\beq}{\begin{equation}}
\newcommand{\eeq}{\end{equation}}
\newcommand{\bqa}{\begin{eqnarray}}
\newcommand{\eqa}{\end{eqnarray}}

\newcommand{\Tr}{\textrm{Tr}}

\newcommand{\forget}[1]{}

\def \cX{{\cal X}}

\def \del{\partial}

\newcommand{\tr}[1]{\mathrm{Tr}\left(#1\right)}
\fi

\newtheorem{theorem}{Theorem}
\newtheorem{lemma}[theorem]{Lemma}

\def\SU{\mathop{\rm SU}}
\def\U{\mathop{\rm U}}

\newcommand{\mh}[1]{#1}
\newcommand{\yo}[1]{#1}

\begin{document}

\title{The Cram\'{e}r-Rao approach and global quantum estimation of bosonic states}

\author{Masahito Hayashi}\email{masahito@math.nagoya-u.ac.jp}
\affiliation{School of Data Science, The Chinese University of Hong Kong, Shenzhen, Longgang District, Shenzhen, 518172, China}
\affiliation{International Quantum AcademyFutian District, Shenzhen 518048, China}
\affiliation{Graduate School of Mathematics, Nagoya University, Nagoya, 464-8602, Japan}

\author{Yingkai Ouyang}
\email{y.ouyang@sheffield.ac.uk}
\affiliation{School of Mathematical and Physical Sciences, University of Sheffield, Sheffield, S3 7RH, United Kingdom}

\begin{abstract}
Quantum state estimation is a fundamental task in quantum information theory, 
where one estimates real parameters continuously embedded in a family of quantum states. 
In the theory of quantum state estimation,
the widely used Cram\'er Rao approach which considers local estimation 
gives the ultimate precision bound of quantum state estimation in terms of the quantum Fisher information.
However practical scenarios need not offer much prior information about the parameters to be estimated, and the local estimation setting need not apply. 
In general, it is unclear whether the Cram\'er-Rao approach is applicable for global estimation instead of local estimation.
In this paper, we find situations where the Cram\'er-Rao approach does and does not work for quantum state estimation problems involving a family of bosonic states in a non-IID setting, where we only use one copy of the bosonic quantum state in the large number of bosons setting.
Our result highlights the importance of caution when using the results of the Cram\'er-Rao approach to extrapolate to the global estimation setting.
\end{abstract}

\maketitle

\section{Introduction}\label{S0}
Quantum sensors promise to estimate parameters with unprecedented precision, and are based on a mathematical primitive known as quantum state estimation.
In quantum state estimation, the task is to estimate physical parameters embedded within quantum states with minimal error.
The Cram\'er-Rao approach \cite{Helstrom,Holevo,two2,two4,HM08,HO}, a prevalent technique in quantum state estimation which provides lower bounds on the minimum mean square error (MSE) of the estimate.
Such lower bounds, known as Cram\'er-Rao bounds (CRBs), 
use Fisher information obtained from quantum measurements.
Since the Fisher information captures only the local structure of a statistical model, 
Cram\'er-Rao bounds are best suited for local estimation problems,
where one assumes that the unknown parameter is within a small neighborhood of a known value. 
In multiparameter quantum state estimation, 
the Cram\'er-Rao approach is more complicated than in the single parameter case; 
unlike the single-parameter case, where the CRB is simply the inverse of the quantum Fisher information \cite{Helstrom}, the multiparameter situation requires additional nontrivial techniques \cite{HO}.

\mh{Despite the prevalence of CRB\yo{s} in quantum state estimation theory, one should note that they are fundamentally designed for local estimation settings, where the parameter's neighborhood is known. 
They are not directly formulated for global estimation settings, where the true parameter's location is unknown and the entire parameter space must be considered \cite{gorecki2020pi,meyer2023quantum,Wojciech,Chesi,Hall_2012,Ha11-2,Ha06}. While the MSE for global estimation is naturally lower bounded by the MSE for local estimation, and thus by a CRB, a CRB truly represents the precision of an estimator only if the MSE for global estimation actually achieves that CRB.}

\mh{To justify this attainability, many researchers traditionally consider the asymptotic setting with independent and identically distributed (IID) quantum states. Indeed, studies like Ref. \cite{two4} have shown that a specific type of CRB (the Holevo type) can be asymptotically attained for global estimation in the IID setting, even for multiple parameters, when the model is fixed. Furthermore, if a state family forms an exponential family, such as thermal states \cite{PhysRevResearch.6.043171,PhysRevLett.127.190402}, the CRB can be attained without requiring an anymptotic limit \cite[Theorem 6.7]{hayashi2016quantum}\cite{nagaoka89selectedpapers,nagaoka87selectedpapers}.}

%The Misconception of IID in Quantum Metrology

\mh{The success of CRBs in these specific state estimation scenarios has led to a widespread expectation that the CRB can also be attained in channel estimation when multiple uses of the same unknown channel are available. This situation is often implicitly treated as an \yo{``}IID setting" for channels within the community. For example, in the canonical phase estimation problem for qubit systems (an estimation of unitary), 
Refs. \cite{Giovannetti,Giovannetti06,Giovannetti11} focused on this problem, demonstrating that the maximum Fisher information achieves Heisenberg scaling and showing that this maximum is attained when the NOON state is used as input. (The NOON state has been experimentally implemented by several studies \cite{Thomas-Peter,Jonathan,Okamoto_2008,Nagata07}.) Building on these findings, 
they concluded that the phase can be estimated with Heisenberg scaling.}
%these papers then claimed that the phase can be estimated with Heisenberg scaling.

\mh{However, this assumption that the CRB is always attainable under "IID" conditions, particularly in global estimation, is often inaccurate. Our paper aims to highlight this critical distinction. For instance, in the canonical phase estimation problem for qubit systems, Ref. \cite{gorecki2020pi} showed that the MSE of the global estimation problem is $\pi^2$ times larger than the MSE of the local estimation problem, whose performance matches Fisher information. Similar discrepancies were observed in Ref. \cite[Section 4]{Ha11-2} and Ref. \cite[Section 5]{Ha06}. Furthermore, Ref. \cite{hayashi2018resolving} found that the NOON state similarly fails to saturate the CRB in the global estimation setting for phase estimation using two bosonic modes. These facts demonstrate the significant difficulty in establishing a direct relationship between the MSE for global estimation and CRBs in channel estimation.}

\mh{It is crucial to note that even the IID setting for state estimation is not universally sufficient to guarantee CRB attainability for global estimation. 
When the complexity of the state family (i.e., the model) increases with the number of available copies, CRBs generally cannot be saturated. For example, in classical systems, CRBs cannot be saturated for estimating entropy and Rényi entropy in the IID setting when the system size increases \cite{Paninski,VV,AOTT}. Similarly, for quantum entropy and quantum Rényi entropies, \yo{quantum CRBs are found to be inaccurate} for global estimation strategies when the system size increases \cite{AISW,WZ,TAD,H24} \footnote{This is because the CRB in these cases equals the varentropy \cite{TAD,H24}, which is upper bounded by $(\log d)^2$ \cite[Lemma 8]{H02}. If the CRB accurately described optimal global estimation performance with a constant error, the sample complexity would be $O(\log d)^2$ for a $d$-dimensional system. However, estimating both classical and quantum entropy to a constant error requires much larger complexity, even in the classical setting \cite{Paninski,VV,AOTT,AISW,WZ}.}}

%Clarifying Conditions for CRB Attainability

\mh{Despite these challenges, it has been shown that the CRB can still be achieved in global estimation settings for channel estimation when the maximum Fisher information increases linearly with respect to the number of uses of the channel \cite{two3}
\footnote{The idea in Ref. \cite{two3} is to consider $m$ uses of a channel as a single effective channel, prepare an optimal input state for it, and then repeat this process $k$ times, viewing it as a state estimation problem. Under these conditions, the CRB can be attained if the maximum Fisher information scales linearly with the number of uses of the channel. Recently, Ref. \cite{PhysRevLett.128.130502} employed this idea to achieve what they term Heisenberg scaling. However, in this case, the inverse of the MSE behaves as $O(km^2)$ while the total number of channel uses is $km$, meaning that true Heisenberg scaling, which demands an $O((km)^2)$ scaling, is not achieved.}
Given the possible disparity between the MSE for global and local estimation problems even in the IID setting, one might not expect CRBs to be accurate in more complicated non-IID settings. Surprisingly, however, in classical estimation theory, it was found that for non-IID situations, such as globally estimating parametrized classical Markovian processes and classical hidden Markovian processes with a finite state system, the MSE is accurately described by CRBs \cite{MH16-8,Hain17-3}.
Nevertheless, the inherent difficulties in guaranteeing CRB attainability have led some researchers to employ complementary approaches like Ziv-Zakai Error Bounds, which offer alternative perspectives on estimation limits \cite{PhysRevLett.108.230401,Rubio_2018}.
Furthermore, as was known even in the physics community \cite{Braunstein_1992}, the CRB cannot be attained with finite samples, even in the classical IID setting with a fixed model.}

Initial steps tackling the attainability of the CRB have been taken. For example, \mh{Ref. \cite{PhysRevResearch.6.L032048,PhysRevA.104.052214}} described a numerical approach based on semidefinite programs that calculates global estimation bounds from local estimation bounds on a fictitious state. Moreover, the attainability of the CRB in the one-copy setting has been studied \cite[Theorem 6.7]{hayashi2016quantum}\cite{nagaoka89selectedpapers,nagaoka87selectedpapers,Nagaoka23}. However, the attainability of the CRB for other non-IID settings has been less studied. Indeed, determining whether a CRB accurately describes global estimation problems remains a non-trivial challenge, even in the IID setting.

In this paper, we study the accuracy of Cram\'er-Rao bounds for the global estimation of bosonic quantum states in a very non-IID setting, where we only have one copy of a parametrised bosonic quantum state, and consider the limit where the number of bosons becomes very large. 
Bosonic quantum states are ubiquitous, because any fundamental particle in the universe is either a boson or a fermion.
Furthermore, we can realize a boson as a composite particle, comprising of an even number of fermions and number of bosons. 
In the mathematical framework of second quantization, 
bosons are naturally represented in the Fock basis, where basis elements count the occupancy of bosons in the available modes.
For indistinguishable bosons, the corresponding quantum state resides within a space spanned that is invariant under any permutation of the underlying bosons, and are hence symmetric. 
We can realize such bosons in various physical systems, such as Bose-Einstein condensates (BECs) in cold atomic systems \cite{anglin2002bose}.

Mathematically, bosonic states have the same structure as 
symmetric states.
In the context of local estimation theory, symmetric states promise a quantum advantage in certain quantum metrology problems in the noiseless setting.
The additional simplicity of the preparation and control of symmetric states \cite{johnsson2020geometric,ouyang2022quantum} makes symmetric states attractive candidates to demonstrate the near-term advantage of quantum technologies. 
However in a practical setting, we may not have the requisite prior information about the parameters embedded within these symmetric states that are to be estimated, 
and require a global estimation strategy.
Given this, it is pertinent to understand the applicability of the Cram\'er-Rao approach, particularly for the symmetric states for which it is purported that 
a quantum advantage might be available.

First, we focus on families of bosonic states that are diagonal in the number basis.
Namely, we consider bosonic states that are probabilistic mixtures of states with a fixed number of bosons with the following probability distributions;
(1) a geometric distribution, (2) a binomial distribution, and (3) a delta distribution.
A notable example of the delta distribution is the half-Dicke state, which has a quantum advantage in the parameter estimation in the direction of SU(2) using Fisher information \cite{PhysRevA.85.022321,tothPhysRevA.85.022322,lucke2011twin,halfdicke}.
The embedded parameter describes the probability distribution.
In this case, we show that the MSE from the Cram\'er-Rao approach is equal to the MSE in the global estimation setting.

Second, we proceed to families generated by 
unitary evolutions of $\SU(2)$
over probe states that begin in the number basis.
Unlike many previous studies \cite{gorecki2020pi,meyer2023quantum,Wojciech,Chesi,
DARIANO1998103,Macchiavello03,Gebhart,
Wojciech22,Hayashi_2022,Luis,
Buzek,Higgins2007,Hall_2012,Kitaev,Cleve,Imai_2009} that focus on estimating mutiple phases in a similar setting, we specifically address the less studied scenario of  
the estimation of the state family generated by
unitary evolutions of $\SU(2)$.
The probe states can be in a (1) binomial distribution, (2) geometric distribution, and a (3) delta distribution in the number basis. 
For this, we consider the problem of estimating parameters embedded in a unitary model, where the unitary channel acts on the probe state.
We analyze the global estimation of this problem by drawing an analogue between the bosonic system and the $\SU(2)$ system with a spin-$j$ system, and we employ the covariant approach initiated by Holevo \cite{HOLEVO1979385}, \cite[Chapter 3]{Holevo}. 
\mh{Fortunately, this covariant approach guarantees that
a covariant estimator achieves the optimal estimator 
with group symmetry \cite{HOLEVO1979385}, \cite[Chapter 4]{Holevo}, \cite[Chapter 4]{Group2}, which allows us to restrict our estimator 
within covariant estimators.}
Since we may describe global estimation using an appropriate minimax problem,
the covariant approach works for the global estimation.
By calculating both local and global estimation bounds, 
we are able to determine if the Cram\'er-Rao approach is accurate for the global estimation of our unitary model. 
For the half-Dicke state, we show that global estimation does not have the quantum advantage that local estimation promises.

The remaining part of this paper is organized as follows.
First in Section \ref{S1}, we explain how bosonic states may arise in practice, 
and one may prepare the families of bosonic states that we consider in our paper.
Second in Section \ref{S2}, we review the general formulation of quantum state estimation based on the Cram\'{e}r-Rao approach.
Third in Section \ref{S3}, 
we discuss the attainability of the Cram\'er-Rao bound in the global estimation setting with respect to several quantum state estimation problems.
Fourth in Section \ref{S4}, we discuss the attainability and the unattainability of the Cram\'er-Rao bound in the global estimation setting 
with respect to several quantum state estimation problems under a unitary model.
Finally in Section \ref{S6}, we have a final discussion of the results that we obtain. 

We reiterate that our paper allows one to calculate the optimal global estimation bound for a unitary channel that acts on a symmetric probe state, 
and we show that such a global bound need not be equal to the local estimation bound from the Cram\'er-Rao approach.

\section{Bosonic states and their preparation}\label{S1}
\subsection{Boson Fock space and geometric distribution}
There are physical systems where we may realize indistinguishable identical bosons. 
For instance, ultracold neutral atoms, when sufficiently cooled and confined, can become indistinguishable, and hence are fundamentally bosonic states. 
We can interpret neutral atoms using their total spin or electronic states as internal degrees of freedom as bosonic states.
Similarly, we can interpret photons that are indistinguishable in all aspects except for their polarizations as bosonic states.

An example of a bosonic system that is controllable in the near term with a large number of bosons is a system of ultracold neutral atoms. 
Neutral atoms can be realised as bosons if we interpret each neutral atom as a composite particle with an equal number of protons and electrons and an even number of neutrons. 
Almost every neutral atom has an isotope that is a boson. 
Examples of neutral atoms that are bosons include group I elements such as 
Li-7, Na-23 and Rb-87. BECs of such indistinguishable identical neutral atoms are now routinely realized in experiments, with the number of bosons being as large as $10^{10}$ \cite{anglin2002bose}.
For ultracold neutral atoms, the internal degrees of freedom can for instance correspond to the total spin of each atom, which can take on two accessible values.
For photons, the internal degrees of freedom can correspond to their horizontal and vertical polarizations.

Our paper considers the quantum state estimation problem for a system of $n$ identical and indistinguishable bosons. 
We model the bosonic system with $d$ kinds of distinguishable modes as the $d$-mode bosonic Fock space ${\cal H}_{B,d}$, which is spanned by the basis $\{ |n_1, n_2, \ldots, n_d\rangle_B : n_1 , \dots, n_d \ge 0\}$.
The space ${\cal H}_{B,d}$ is written as the tensor product space 
${\cal H}_B^{\otimes d}$,
where ${\cal H}_B$ is the one-mode bosonic Fock space 
spanned by the basis $\{ |n\rangle_B\}_{n=0}^{\infty}$.
The $d$-mode Fock space can also decomposed as
\begin{align}
{\cal H}_{B,d}=\bigoplus_{n=0}^{\infty}{\cal H}_{B,d,n},
\end{align}
where ${\cal H}_{B,d,n}$ 
are Fock spaces with a total of $n$ bosons in $d$ modes
spanned by 
\begin{align}
 \mathcal B_n \coloneqq 
 \{|n_1, n_2, \ldots, n_d\rangle_B : \sum_{k=1}^d n_k=n , n_k \ge 0\}.
\end{align}
We may interpret the spaces ${\cal H}_{B,d,n}$ as constant excitation spaces with $n$ excitations \cite{ouyang2019permutation}, which are eigenspaces of Hamiltonians that are sums of independent and identical single-mode operators diagonal in the Fock basis.
The space ${\cal H}_{B,d,n}$ is also isomorphic to the symmetric subspace in $n$-fold tensor product space of the $d$-dimensional space. Since automorphisms on symmetric space can be described using the group $\SU(d)$, we can also use $\SU(d)$ to describe automorphisms on ${\cal H}_{B,d,n}$.

Hereafter, we focus on the case with $d=2$, which corresponds to bosons with two internal degrees of freedom. 
Denoting the spin-$\frac{n}{2}$ space as ${\cal H}_{\frac{n}{2}}$, we may decompose
the space ${\cal H}_{B,d,n}$ as a direct sum of spin-spaces given by
\begin{align}
{\cal H}_{B,2}=\bigoplus_{n=0}^{\infty}
{\cal H}_{\frac{n}{2}}.
\end{align}
The spin-$j$ space is spanned by 
$\{|j ;m\rangle\}_{m=-j}^j$, and its automorphisms have the symmetry of the group $\SU(2)$,
where the operators $J_1,J_2$ and $J_3$ satisfy the commutation relations 
$[J_i,J_j] = \sqrt{-1} \epsilon_{i,j,k} J_k$ and form the Lie algebra of $\SU(2)$, with $\epsilon_{i,j,k}$ denoting the Levi-Civita symbol, and $J_3$ being a diagonal operator in the Fock basis.
Using this idea, we can identify 
the vector $|n-k,k\>_B$ in the boson Fock space with 
the vector 
$|\frac{n}{2}; k-\frac{n}{2}\>$ in the spin-$\frac{n}{2}$ space.

We may realize the geometric distribution on two-mode Fock states in the number basis with a total of $n$ bosons by starting from thermal states of a 
two-mode Hamiltonian given by $G_{\alpha_1,\alpha_2} \coloneqq \alpha_1 N_1 + \alpha_2 N_2$,
where $N_j$ is the number operator 
on the $j$-th mode.
This Hamiltonian $G_{\alpha_1,\alpha_2}$ represents the sum of two independent single-mode Hamiltonians in the Fock basis where the energy properties of the two modes can be different.
The thermal state that corresponds to the Hamiltonian $G_{\alpha_1,\alpha_2}$ at the inverse temperature $\beta$ is 
given as $c\exp(-\beta G_{\alpha_1,\alpha_2})$ for some normalizing constant $c$,
which we can write as
\begin{align}
\rho_{{\rm G}, \alpha_1,\alpha_2, \beta} = c\!\!\!\!
\sum_{n_1, n_2 \ge 0}\!\!
e^{-\beta (\alpha_1 n_1 + \alpha_2 n_2) }
|n_1, n_2\>_{\mathcal B}  \<n_1, n_2|_{\mathcal B} .
\end{align}
After we measure the total number of bosons and observe $n$ bosons, 
the state becomes
\begin{align}
\rho_{{\rm G}, r} ^{(n)}
&= c'
\sum_{k=0}^n
e^{-\beta (\alpha_1 (n-k) + \alpha_2 k) }
|n-k,k\>_{\mathcal B}  \<n-k, k|_{\mathcal B}  \notag\\
&= c''
\sum_{k=0}^n
r^k 
|n-k,k\>_{\mathcal B}  \<n-k, k|_{\mathcal B} 
\end{align}
for other some normalization constants $c'$ and $c''$,
where $r=e^{-\beta(\alpha_1-\alpha_2)}.$
The state 
$\rho_{{\rm G}, r} ^{(n)} $ 
is a geometric distribution in the number of bosons in the second mode, with geometric ratio given by $r$.

Next, we consider the case when a beam splitter operator applies across the two modes.
Since the beam splitter operator corresponds to an element of 
$g \in \SU(2)$,
we can consider the state estimation for the state family
$\{ U_{\frac{n}{2},g}
\rho_{{\rm G},r}^{(n)} 
U_{\frac{n}{2},g}^\dagger : g\in SU(2)\}$
on the spin-$\frac{n}{2}$ space ${\cal H}_{\frac{n}{2}}$,
where $U_{\frac{n}{2},g}$ denotes a unitary representation of $g$ on the space $\mathcal H_{\frac n 2}$.

Since $J_3$ is a diagonal operator in the Fock basis, it leaves the state 
$\rho_{{\rm G},r}^{(n)}$ invariant.
Then we may identify
the parameter space as the homogeneous space $\SU(2)/\U(1)$, 
where $\U(1)$ is the one-parameter group generated by $J_3$. 
We consider estimating the group parameter $[g]\in \SU(2)/\U(1)$
under the state family $\{ U_{\frac{n}{2},g}
\rho_{{\rm G},r}^{(n)}
U_{\frac{n}{2},g}^\dagger : {[g]\in \SU(2)/\U(1)} \}$,
which amounts to estimating two real-valued parameters.

\subsection{Symmetric space and binomial distribution}
Since a spin-$\frac{n}{2}$ system is mathematically equivalent 
to the symmetric subspace of an $n$-qubit system,
we can represent a bosonic state mathematically as a symmetric state on $n$ qubits.
Manipulating symmetric states is achievable in the near-term, because the requisite quantum control techniques do not require the individual addressability of individual qubits \cite{johnsson2020geometric}. 
By leveraging on existing experimental know-how both in creating BECs and controlling large numbers of identical indistinguishable neutral atoms \cite{anglin2002bose} and controlling photonic systems \cite{erhard2020advances}, 
conducting actual quantum sensing experiments on such symmetric states is a near-term possibility. 

In this scenario, 
we can consider another distribution instead of the geometric distribution.
That is, we discuss how to prepare 
the following state over the spin-$j$ system
\begin{align}
\rho = \sum_{m=-j}^j p_m |j;m\>\<j;m|,\label{BNA}
\end{align}
where 
(1) $p_k$ follows a binomial distribution and 
(2) $p_k$ follows a delta distribution.

On the symmetric space, the operators $J_1$, $J_2$ and $J_3$ 
are angular momentum operators that map symmetric states to symmetric states. In terms of the Pauli operators $\sigma_1, \sigma_2$ and $\sigma_3$, 
we can write the angular momentum operator $J_j$ as 
\begin{align}
J_j  = \frac{1}{2}  ( \sigma_j^{(1)} + \dots +  \sigma_j^{(n)} ) ,
\end{align}
where $\sigma_j^{(k)} $ denotes 
Pauli operator $\sigma_j$ on the $k$-th particle. 

One may prepare a quantum state $\rho$ with a binomial distribution of states in the basis $\{|j {;} m\rangle\}_{m=-j}^j$, that is where 
$p_m =  {n \choose \frac{n}{2}+m} p^{\frac{n}{2}-m} (1-p)^{\frac{n}{2}+m}$, according to the following procedure.
First, one prepares the initial separable state
\begin{align}
|\psi_p\> = \Big(\sqrt{1-p}|\frac{1}{2};-\frac{1}{2}\> + \sqrt p |\frac{1}{2};\frac{1}{2}\>\Big)^{\otimes n} .
\end{align}
Second, one dephases the pure state $|\psi_p\> $ in the basis $\mathcal B_n$ using the master equation $d \tau / dt = {\mathcal D}(\tau) $, 
where 
\begin{align}
{\mathcal D}(\rho) = 
\gamma ( J_3 \rho J_3^\dagger - \frac{1}{2}J_3^\dagger J_3\rho - \frac{1}{2}\rho  J_3^\dagger J_3) .
\end{align}
Since we have
\begin{align}
|\psi_p\>= 
\sum_{m=-\frac{n}{2}}^{\frac{n}{2}} \sqrt{p^{\frac{n}{2}-m} (1-p)^{\frac{n}{2}-m} } \sqrt{  {n \choose \frac{n}{2}+m} }
|\frac{n}{2} ; m\>,
\end{align}
complete dephasing of the state $|\psi_p\>$ in the eigenbasis of $J_3$ will yield a binomial distribution of states in the basis 
$\{|j ; m\rangle\}_{m=-j}^j$. 
In particular, we have $\lim_{t \to \infty} e^{{\mathcal D} t}(|\psi_p\>\<\psi_p|) = 
\rho^{(n)}_{{\rm B},p}$ where 
\begin{align}
\rho^{(n)}_{{\rm B},p}:=\!\!\!
\sum_{m=-\frac{n}{2}}^{\frac{n}{2}}%, n_2: n_1+n_2=n } \!\!\!
{n \choose \frac{n}{2}+m}
p^{\frac{n}{2}-m} (1-p)^{\frac{n}{2}+m}  
 |\frac{n}{2};m\>\<\frac{n}{2};m|,
\end{align}
and
$ e^{{\mathcal D} t} = \mathcal I + \sum_{k=1}^\infty \frac{t^k}{k!} {\mathcal D}^k$, and $\mathcal I$ denotes the identity operator.
Hence, one can apply $e^{t {\mathcal D}}$ on $|\psi_p\> $ for large $t$ to approximately obtain a binomial distribution on states in the basis $\{|j{;}m\rangle\}_{m=-j}^j$. 

Once an unknown unitary $U_{\frac{n}{2},g}$ with $g \in \SU(2)$
is applied,
in the same way as with the geometric distribution, 
we can consider the state family 
$\{ U_{\frac{n}{2},g}
\rho^{(n)}_{{\rm B},p}
 U_{\frac{n}{2},g}^\dagger: {[g] \in \SU(2)/U(1)} \}$.

\if0
One can also obtain a geometric distribution of states in the basis $\mathcal B_n$ using 
the master equation $d\tau /dt = \mathcal D (\tau)$
where
\begin{align}
\mathcal D(\rho) = 
a_+( J_+ \rho J_+^\dagger - \frac{1}{2}J_+^\dagger J_+\rho - \frac{1}{2}\rho  J_+^\dagger J_+)
+
a_- ( J_- \rho J_-^\dagger - \frac{1}{2} J_-^\dagger J_-  \rho - \frac{1}{2} \rho J_-^\dagger J_- ).
\end{align}
Here, we have $a_+,a_- \in \mathbb R$ and $J_\pm \coloneqq \frac{1}{2}(J_1 \pm i J_2)$, and $a_+ \neq a_-$.
A geometric distribution of states in the basis $\mathcal B_n$
has probability distribution satisfying the equations $p_{w+1} = r p_w $ for all $w=0,\dots, n-1$ for some positive number $r$,
and is a fixed point of the operator $\mathcal D$.
In particular, for any symmetric state $\tau$, we have 
\begin{align}
\lim_{t \to \infty} e^{t \mathcal D} \tau = 
\rho^{(n)}_{{\rm G},r}:=
\sum_{k=0}^{n}%, n_2: n_1+n_2=n }
\frac{r-1}{r^{n+1}-1}r^{k}
|\frac{n}{2};m\>\<\frac{n}{2};m|,
\end{align}
where $r = a_+/a_- \neq 1$.
We prove this in Lemma \ref{lem:geometric} of the Appendix.
Hence, in a physical system with Markovian dynamics where both coherent gain and loss errors occur, $\mathcal D$ governs the dynamics of the physical system and any symmetric state will approach this geometric distribution of states in the basis $\mathcal B_n$ for large $t$. 
\fi

\subsection{Delta distribution}
A state with the delta distribution in \eqref{BNA} can be prepared as follows.
There are probabilistic approaches to prepare a specific state in the number basis. 
For the probabilistic approach, one can prepare a binomial or geometric distribution of states in the basis $\{|j{;}m\rangle\}_{m=-j}^j$, and 
subsequently measure in the basis $\{|j{;}m\rangle\}_{m=-j}^j$. 
For the deterministic approach, one can use an ancillary bosonic mode along with a dispersive interaction Hamiltonian that is proportional to $a^\dagger a \otimes J_3$ to implement unitary operations in the 
spin-$j$ system
using geometric phase gates \cite{Alfredo-Luis_2001,zanardi-PhysRevA.65.032327,johnsson2020geometric}.

\section{General formulation of Cram\'{e}r-Rao approach}
\label{S2}
In quantum state estimation, we are given copies of an unknown state $\rho_{\theta_0}$ from the set of quantum states $\{\rho_\theta : \theta = (\theta^1, \dots , \theta^d) \in \Theta \}$
where $\Theta$ is a continuous set in $\mathbb R^d$.
We assume that the quantum states $\rho_\theta$ are differentiable with respect to parameter $\theta$ for all $\theta \in \Theta$. 
Our objective is to find the minimum MSE of a locally unbiased estimator $\hat \theta$ that estimates the true parameter $\theta_0$.

We describe a measurement using a set of positive operators $\Pi = \{ \Pi_x : x\in \mathcal X\}$ labeled by a set $\mathcal X$, where the completeness condition $\sum_{x \in \mathcal X}\Pi_x = I $ holds.
By Born's rule, a measurement $\Pi$ on a quantum state $\rho_\theta$ gives the classical label $x$ and the state 
$\Pi_x \rho_\theta / \tr(\Pi_x \rho_\theta )$ 
with probability $p_\theta(x) = \tr(\Pi_x \rho_{\theta})$.
Given a function $f$ of the classical label $x$,
 we denote $\mathbb E_{\bm{\theta}}[f(x) |\Pi]$ as the expectation of $f(x)$, with probability distribution obtained according to Born's rule.

Given a measurement $\Pi$ and an estimator $\hat{\bm{\theta}}$ that depends on the classical label $x$, we denote  
$\hat{\Pi}=(\Pi , \hat{{\bm{\theta}}})$ as an {\it estimator}. 
When the true parameter $\theta_0$ is equal to $\theta$, we define the mean-square error (MSE) matrix for the estimator $\hat{\Pi}$ as  
\begin{align}\nonumber
V_{{\bm{\theta}}}[\hat{\Pi}] 
&=
\sum_{i,j=1}^d |i\>\<j|
 \mathbb E_{\bm{\theta}}\big[({\hat{\theta}^i}(x)-\theta^i)({\hat{\theta}^j}(x)-\theta^j)|\Pi\big] \notag.
\end{align}
In multiparameter quantum metrology, the objective is to 
find an optimal estimator $\hat{\Pi}=(\Pi , \hat{{\bm{\theta}}})$ that minimizes $\tr G V_{{\bm{\theta}}}[\hat{\Pi}]$, where a weight matrix $G$, a size $d$ positive semidefinite matrix, quantifies the relative importance of the different parameters. 

Our estimator $\hat{\Pi}$ is unbiased at $\theta_0 = \theta$ if for all $i = 1,\dots, d$, the expectation of our estimator equals the true value of the parameter $\theta_0$ when $\theta_0 = \theta$, that is 
\begin{align}
\mathbb E_{\bm{\theta}}\big[{\hat{\theta}^i}(x)|\Pi\big]&
=\sum_{x\in\cX} {\hat{\theta}^i}(x) \tr{\big[\rho_{{\bm{\theta}}}\Pi_x\big]}
=\theta^i \label{MK}.
\end{align}
Our estimator is globally unbiased if \eqref{MK} holds for all $\theta \in \Theta$.
We can also consider locally unbiased estimators, which are estimators that are unbiased in the neighborhood of the true parameter $\theta_0$.  
For this aim,
we define $D_j:= \frac{\partial\rho_\theta}{\partial \theta^j}|_{\theta=\theta_0}$,
and $\rho:= \rho_{\theta_0}$.
Taking partial derivatives on both sides of \eqref{MK}, we get
\begin{align}
\frac{\del}{\del\theta^j}\mathbb E_{\bm{\theta}}\big[{\hat{\theta}^i}(x)|\Pi\big]&=
\sum_{x\in\cX} {\hat{\theta}^i}(x)\tr{   D_j \Pi_x}
=\delta_i^j \label{M1}.
\end{align}
The estimator $\hat{\Pi}$ is locally unbiased if \eqref{MK} holds for all $i=1,\dots, d$ for a fixed $\theta$ where $\theta_0 = \theta$, and when \eqref{M1} holds for all $i,j = 1,\dots, d$.

 For any weight matrix $G = \sum_{i,j=1}^d g_{i,j}|i\>\<j|$, 
the tight Cram\'{e}r-Rao (CR) type bound, i.e.,
 the fundamental precision limit \cite{HO}, is
\be\label{qcrbound}
C_{\bm{\theta}}[G]:=
\min_{\hat{\Pi}\mathrm{\,:l.u.at\,}{\bm{\theta}}}\Tr{ \big[G V_{\bm{\theta}}[\hat{\Pi}]\big]}, \notag
\ee
where `l.u.~at ${\bm{\theta}}$' indicates our minimization over all possible estimators under the locally unbiasedness condition. 
Since this minimum is attained by $\hat{\Pi}$ satisfying \eqref{MK} when we impose only the condition \eqref{M1}, it suffices to consider $C_{\bm{\theta}}[G]$ as a minimization with only the condition \eqref{M1}.

To evaluate $C_{\bm{\theta}}[G]$, 
we often focus on the symmetric logarithm 
derivative (SLD) $L_j$, which is an operator that satisfies the equation
\begin{align}
D_j= \frac{1}{2} (L_j \rho +\rho L_j).
\end{align}
The SLD Fisher information matrix $F=(F_{i,j})$ is given as
\be
F_{i,j}:= \frac{1}{2}  \Tr L_i (L_j \rho +\rho L_j).
\ee
The tight CR bound $C_{\bm{\theta}}[G]$ can be lower bounded as follows 
\be
C_{\bm{\theta}}[G] \ge C^{\rm S}_{\bm{\theta}}[G]:=\Tr G F^{-1}.\label{CZO}
\ee
The RHS of \eqref{CZO} is called the SLD bound.

In the one-parameter case, 
we do not need to handle the trade-off among various parameters.
In this case, the equality in \eqref{CZO} holds.
We can attain this bound using a projective measurement in the eigenbasis of the SLD $L$.
Hence, in the multiple-parameter case, 
when the SLDs $L_j$ are non-commutative,
their spectral decompositions cannot be measured simultaneously.
However, it is possible to randomly choose one of the SLDs $L_j$ and measure it, as was studied in \cite{H97}.
To discuss a simple case of this strategy,
we assume that the SLD Fisher information matrix $J$ has no off-diagonal element.
%When $G=I$,
The tight CR bound $C_{\bm{\theta}}[G]$ can be evaluated simply as follows \cite{Helstrom}.
\be
C_{\bm{\theta}}[G] \le d \Tr G F^{-1}.
\ee
We attain the SLD bound by measuring in the eigenbasis of the SLD $L_j$ with equal probability for $j=1, \ldots, d$.
Thus, when $d=2$, the SLD bound decides
$C_{\bm{\theta}}[G]$ within twice the range.

To get a better lower bound, 
we often focus on the right logarithm 
derivative (RLD) $\tilde{L}_j$, which is an operator that solves the equation
\begin{align}
D_j= \rho \tilde{L}_j.
\end{align}
The RLD Fisher information matrix $\tilde{F}=(\tilde{J}_{i,j})$ is given as
\be
\tilde{F}_{i,j}:=  \Tr \tilde{L}_i \rho \tilde{L}_j.
\ee
The tight CR bound $C_{\bm{\theta}}[G]$ can be lower bounded as follows.
\begin{align}
C_{\bm{\theta}}[G] \ge 
C^{\rm R}_{\bm{\theta}}[G]:=&\Tr 
\Re \sqrt{G} \tilde{F}^{-1}\sqrt{G}
\notag\\
&+\Tr
|\Im \sqrt{G} \tilde{F}^{-1}\sqrt{G}|
.\label{CZO4}
\end{align}
The RHS of \eqref{CZO4} is called the RLD bound \cite[Chapter 6]{Holevo}.
%Now, we assume that
To consider this bound, we define the operator $D$ as
\begin{align}
[\rho,X]=\frac{1}{2}(\rho D(X)+D(X)\rho).
\end{align}
We say the D-invariant condition holds if $D(L_j)$ is in the linear span of $L_1, \ldots, L_d$. 
We define matrix $D=(D_{j,k})$ as
$D_{j,k}:= \Tr D(L_j) D_k $.
In this case,
%When the model satisfies a certain condition, 
the $\tilde{F}^{-1}$ is calculated as \cite[Chapter 6]{Holevo}
\begin{align}
\tilde{F}^{-1}= F^{-1}+\frac{i}{2} F^{-1}D F^{-1}.\label{NMN}
\end{align}
Then, the RLD bound is calculated as \cite[Chapter 6]{Holevo}
\begin{align}
C^{\rm R}_{\bm{\theta}}[G]=\Tr G F^{-1}+\frac{1}{2}\Tr |\sqrt{G} F^{-1}D F^{-1} \sqrt{G}|.
\end{align}
and is
a better lower bound than the SLD bound.
Since \eqref{NMN} implies 
\begin{align}
\tilde{F}^{-1} \le 2 F^{-1},
\end{align}
we have 
\begin{align}
\Tr G F^{-1} \le C^{\rm R}_{\bm{\theta}}[G]\le 2 \Tr G F^{-1}. \label{LOA}
%=2 C^{\rm S}_{\bm{\theta}}[G].
\end{align}
That is, the RLD bound differs from the SLD bound by up to a factor of two for D-invariant models.

As a tighter lower bound, we employ Holevo-Nagaoka (HN) bound as follows
\cite{Holevo,nagaoka89-2,HM08}.
Given a tuple of Hermitian matrices $\vec{X}=(X_1, \ldots,X_d)$, we define 
the matrix $Z(\vec{X})=(Z_{j,k}(\vec{X}))$ as
\begin{align}
Z_{j,k}(\vec{X}):= \Tr \rho X_j Z_k.
\end{align}
We impose the following condition to $\vec{X}$;
\begin{align}
\Tr X_j D_k=\delta_{j,k} \label{BNO}.
\end{align}
Then, we define
\begin{align}
C^{\rm HN}_{\bm{\theta}}[G]:=
\min_{\vec{X}} \Tr G \Re Z(\vec{X})+
\Tr |\sqrt{G} \Im Z(\vec{X})\sqrt{G}|,
\end{align} where
the minimum is taken under the condition \eqref{BNO}.
Then, we have
\begin{align}
C^{\rm HN}_{\bm{\theta}}[G] \le C_{\bm{\theta}}[G].
\end{align}
Furthermore, we have
\begin{align}
C^{\rm R}_{\bm{\theta}}[G] &\le C^{\rm HN}_{\bm{\theta}}[G] \label{NAS}\\
C^{\rm S}_{\bm{\theta}}[G] &\le C^{\rm HN}_{\bm{\theta}}[G].
\end{align}
When the model is D-invariant, the equality in \eqref{NAS} holds \cite{HM08}.

For example, when we choose $\vec{X}$ as
$\vec{X}_*=(X_{k,*})$ with
$X_{k,*}:= \sum_{j=1}^d (F^{-1})_{k,j} L_j$,
$\vec{X}_*$ satisfies the condition \eqref{BNO}.
Also, we have $\Re Z(\vec{X}_*)=F^{-1}$.
Since $Z(\vec{X})\le 2 \Re Z(\vec{X})$,
we have
\begin{align}
C^{\rm HN}_{\bm{\theta}}[G] 
&\le
\Tr G \Re Z(\vec{X}_*)+
\Tr |\sqrt{G} \Im Z(\vec{X}_*)\sqrt{G}| \notag\\
&\le 2 \Tr G \Re Z(\vec{X}_*) = 2\Tr G F^{-1}.
\end{align}
That is, the HN bound differs from the SLD bound by up to a factor of two 
for D-invariant models.
Hence for D-invariant models, we have good upper and lower bounds on the tight CR-bound based on the easily computable SLD bound.

\if0
This bound is another form.
We expand our model around the state $\rho=\rho_{\theta_0}$
so that we have additional derivatives $D_{d+1}, \ldots, D_{d'}$
around the state $\rho$.
The expanded model is assumed to be D-invariant.
We denote the RLD bound in the expanded model by
$C^{\rm R}_{\bm{\theta},ex}[G']$.
Then, we introduce a $d'\times d$ matrix $P$ under the following condition;
\begin{align}
P_{k,j}=\delta_{k,j}. \label{ZXP}
\end{align}
Then, we have
\begin{align}
C^{\rm HN}_{\bm{\theta}}[G]=
\min_P C^{\rm R}_{\bm{\theta},ex}[ P GP^T],
\end{align}
where the minimum is taken under the condition \eqref{ZXP}.
When we denote the SLD Fisher information matrix of the extended model
by $F_{ex}$,
using \eqref{LOA}, we have
\begin{align}
\min_P C^{\rm R}_{\bm{\theta},ex}[ P GP^T]
\le \min_P 2 \Tr F_{ex}^{-1} P GP^T 
= \min_P 2 \Tr P^T F_{ex}^{-1} P G
\le 
\end{align}
\fi

\section{Attainability of the Cram\'er-Rao bound in the global estimation setting}\label{S3}
 
Here, we give examples where the tight Cram\'er-Rao bound equals to the minimum MSE for global estimation strategies.

First, we consider the task of estimating the parameter $p$ in the state $\rho^{(n)}_{{\rm B},p}$ that is a binomial distribution of states in the basis 
$\{|n-k,k\>_{B}\}_{k=0}^n$.
By measuring in the basis $\{|n-k,k\>_{B}\}_{k=0}^n$, 
we obtain a binomial distribution, 
which has a Fisher information of $\frac{2j}{p(1-p)}$. Hence the tight Cram\'er-Rao bound $C_{p}[1]$ is $\frac{p(1-p)}{2j}$.
We can attain this bound with a global estimator of the parameter $p$ according the following strategy. 
First we measure this density matrix in the basis 
$\{|n-k,k\>_{B}\}_{k=0}^n$.
Second, if we observe the state $|n-k,k\>_{B}$, we set our estimate as $\frac{k}{n}$.
This estimator is unbiased because it has expectation $p$. Moreover, it has MSE $\frac{p(1-p)}{n}$ which attains the tight Cram\'er-Rao bound.

Second, we consider estimating the parameter $r$ in the state 
$\rho^{(n)}_{{\rm G},r}$ which is a normalized geometric distribution on states in the basis $\mathcal B_n$.
By measuring in the basis $\mathcal B_n$, the estimation problem reduces to estimating a geometric distribution.
Now, let us see why the tight Cram\'er Rao bound in this case is equal to the minimum MSE for the global estimation of $r$. 

We begin with the parametrization 
$P_\theta(k):=
\frac{e^{\theta}-1}{e^{\theta (n+1)}-1}e^{\theta k}$, 
which is known as the {\em natural parameter} in the field of information geometry \cite{AN}.
Since the geometric distribution is an exponential family,
the tight CR bound is globally achieved under the {\em expectation parameter} $\eta(\theta)$ \cite{AN}, which is defined as $\eta(\theta):= \sum_{k=0}^n k P_\theta(k)$ where we may calculate $\eta(\theta)$ as 
\begin{align}
\eta(\theta)  
&= \frac{n e^{\theta (n+1)}+1}{e^{\theta (n+1)}-1}-\frac{1}{e^{\theta}-1} %\notag\\
%&
= \frac{n r^{n+1}+1}{r^{n+1}-1}-\frac{1}{r-1}.
\end{align}
Then, the Fisher information for $\theta$ is $F_\theta:= 
\sum_{k=0}^n k^2 P_\theta(k) - \eta(\theta)^2$ which can be calculated as
\begin{align}
&F_\theta 
= \frac{n(n-1)e^{\theta (n+1)}+2}{e^{\theta (n+1)}-1}
+\frac{e^{\theta (n+1)}-3}{e^{\theta (n+1)}-1}
\eta(\theta)- \eta(\theta)^2 \notag \\
=& \frac{n(n-1)r^{n+1}+2}{r^{n+1}-1}
+\frac{r^{n+1}-3}{r^{n+1}-1}
- \eta(\theta)^2 \notag \\
=& \frac{n(n-1)r^{n+1}(r^{n+1}-1)+2(r^{n+1}-1)}{(r^{n+1}-1)^2}\notag\\
&+\frac{(r^{n+1}-3)(n r^{n+1}+1)}{(r^{n+1}-1)^2}
\eta(\theta)- \eta(\theta)^2 .
\end{align}
In this case, when the parameter to be estimated is set to $\eta(\theta)$,
the estimator is given as $k$.
This estimator satisfies the unbiasedness condition, and its variance is $F_\theta$, i.e.,
the Fisher information of the natural parameter.
We can use this procedure to estimate $r$ globally with MSE that attains the tight CR bound $C_{\bm{\theta}}[I]$.

\section{Unattainability of the Cram\'er-Rao bound in the global estimation estimation setting}\label{S4}

\subsection{Local estimation of a unitary channel}\label{local:estimation:S4}
We consider the covariant model on symmetric states of $n$ qubits.
Using the representation theory of $\SU(2)$, we interpret such symmetric states with a spin $j=\frac{n}{2}$ system, wherein it is natural to interpret the number state $|n-k,k\>_{\mathcal B}$ as a spin state $|j;-j+k\rangle$.
We focus on a diagonal state $\rho$ for this basis given as
\be
\rho:=\sum_{m=-j}^{j}
p_{m}
 |j;m\rangle  \langle j;m|.
\ee
Then, given a parameter $\theta:=(\theta_1,\theta_2)$,
we consider the state family 
$\rho_{\theta}:= U_{\theta} \rho
U_{\theta}^\dagger $, where
$U_{\theta} := \exp(i (\theta_1 J_1+\theta_2 J_2))$.
The two-parameter space $\Theta$ is given as
$\{ \theta | |\theta| \le \pi\}$.
This state family $\{\rho_{\theta}\}_{\theta}$
has two parameters,
and is obtained from applying unitary operator $U_\theta$ on an initial probe state $\rho$.
In this model, the SLD Fisher information is diagonal, in the sense that $F_{1,2}=F_{2,1} = 0$.
This implies that $C^{\rm S}_\theta[I] = F_{1,1}^{-1} + F_{2,2}^{-1}$.
Furthermore,
\begin{align}
F_{1,1}=F_{2,2}
=\!\!\!\!
\sum_{m=-j}^{j-1}  \!
\frac{4(p_{m+1}-p_m)^2}{p_{m+1}+p_m }
(j-m)(j+m+1) .\label{NMI}
\end{align}
From \eqref{NMI}, we can apply the Cram\'er-Rao approach on probe states initialised as (1) a binomial distribution of number states, (2) a geometric distribution of number states, and (3) a delta distribution of number states.
\begin{enumerate}
\item {\bf Binomial distribution:-}
Now consider the case when $\rho = \rho^{(n)}_{{\rm B},p}$.
When $p$ is fixed and $j=\frac{n}{2}$ increases,
the diagonal element of the SLD Fisher information $F^{(n),{\rm B},p}$
can be calculated as
\begin{align}
F^{(n),{\rm B},p}_{1,1}=F^{(n),{\rm B},p}_{2,2}\cong 2n .\label{BZV}
\end{align}
Hence $C^{\rm S}_\theta[I] \cong 1/n$.
\item {\bf Geometric distribution:-}
Consider $\rho$ as $\rho^{(n)}_{{\rm G},r}$.
From Appendix \ref{A1}, this model satisfies the D-invariant condition.
Hence, the RLD bound gives a tighter lower bound than  
the SLD bound. As calculated in Appendix \ref{A-D},
when $r$ is fixed and $j=\frac{n}{2}$ increases,
the RLD bound is approximated as
\begin{align}
C^{\rm R}_{\bm{\theta}}[I] \cong \frac{4r}{n(r-1)}.
\label{BZV5}
\end{align}

\item {\bf Delta distribution:-}
Consider $p_m=\delta_{m,a}$ for some integer $a \in [-j+1,j-1]$.
Then we have 
\begin{align}
F_{1,1} = F_{2,2}   
=2(j^2-a^2) + 2j.
\end{align}
Hence when 
$a$ is proportional to $n$, both $F_{1,1}$ and $F_{2,2}$ are quadratic in $j$ and $n$.
Then we have
$C^{\rm S}_\theta[I] \cong (1/4-\alpha^2)^{-1}/n^2$ where $\alpha = a/n$.
\end{enumerate}

\subsection{A group covariant approach for global estimation}
\label{S:groupcovarianttheory}
We consider the state family $\{\rho_{\theta}\}_{\theta}$ as given in 
Section \ref{S4}. Since the state family $\{\rho_{\theta}\}_{\theta}$
has a group covariant structure, 
we can employ a group covariant approach \cite{HOLEVO1979385}, \cite[Chapter 4]{Holevo}, \cite[Chapter 4]{Group2}, where we employ a group covariant error function.
\mh{This is because a covariant POVM realizes the optimal performance 
under the symmetric setting
\cite{HOLEVO1979385}, \cite[Chapter 4]{Holevo}, \cite[Chapter 4]{Group2}.}

Now we consider the spin $j$ system ${\cal H}_j$ spanned by 
$\{|j;m\rangle\}_{m=-j}^j$.
For an unknown value of $\theta$, and our estimate $\hat \theta$,
we employ the fidelity between the states
$U_{\theta} |\frac{1}{2};\frac{1}{2}\rangle$
and 
$U_{\hat \theta} |\frac{1}{2};\frac{1}{2}\rangle$
as 
\begin{align}
R(\theta,\hat{\theta})&:=
\tr 
\Big(U_{\theta} |\frac{1}{2};\frac{1}{2}\rangle \langle \frac{1}{2};\frac{1}{2}| U_{\theta}^\dagger\Big)
\Big(U_{\hat{\theta}} |\frac{1}{2};\frac{1}{2}\rangle \langle \frac{1}{2};\frac{1}{2}| U_{\hat{\theta}}^\dagger\Big)\notag\\
&=|\langle \frac{1}{2};\frac{1}{2}| U_{\theta}^\dagger U_{\hat{\theta}} |
\frac{1}{2};\frac{1}{2}\rangle|^2. \label{fidelity}
\end{align}

Using the parameter $\phi$ as
$e^{-i\phi} |\theta|=\theta_1+i\theta_2$, as shown in Appendix \ref{AA1},
the fidelity is calculated as
\begin{align}
R(\theta,\hat{\theta})
=&\Big|
\cos \frac{|\theta|}{2} \cos \frac{| \hat{\theta}|}{2}
+e^{i(-\phi +\hat{\phi})} \sin \frac{|\theta|}{2}\sin 
\frac{|\hat{\theta}|}{2}
\Big|^2.\label{fidelity2}
\end{align}
In particular, 
when $\theta=0$, this fidelity simplifies to
$R(0,\hat{\theta})=
\cos^2 (|\hat{\theta}|/ 2)$.
%Using $e^{-i\phi} |\theta|=\theta_1+i\theta_2$, 
Then, given a parameter $\theta:=(\theta_1,\theta_2)$,
we consider the state family 
$\rho_{\theta}:= U_{\theta} \rho
U_{\theta}^\dagger $.

We define the error function of our estimate to be 
$\eta(\hat \theta) \coloneqq 4(1-R(0,\hat{\theta}))$, which to leading order in $\hat \theta$ is equivalent to $|\hat{\theta}|^2$.
In this definition, the error function $\eta(\hat \theta)$ is to leading order in $\hat \theta$ equivalent to the MSE of $\hat \theta$ in the local estimation scenario when $\theta= 0$. 
Since the fidelity is unitarily invariant,
the error function $\eta(\hat \theta)$
satisfies the group covariance condition required for the group covariant approach.

We identify the parameter space with the homogeneous space $\Theta = \SU(2)/\U(1)$, where $\U(1)$ is the one-parameter group generated by $J_3$. Then, we employ the invariant probability measure $\nu$ on our parameter space $\Theta$
under the above identification.
The estimator is a POVM $M = \{ M(d \hat \theta) : \hat \theta \in \Theta\}$ with outcomes parametrised according to elements in 
$\Theta$.
Given an estimator $M$,  
we focus on the Bayesian average
\begin{align}
R_\nu(M):=
\int_\Theta
\int_\Theta R(\theta,\hat{\theta}) \tr \rho_\theta M(d \hat{\theta})
\nu(d \theta).
\end{align}
Also, we can calculate the performance of the global estimation strategy for the worst possible value of the true parameter using the expression
\begin{align}
R(M):=
\min_{\theta\in \Theta}
\int_\Theta R(\theta,\hat{\theta}) \tr \rho_\theta M(d \hat{\theta}).
\label{BNVX}
\end{align}
Namely, the expression $\eta(M) \coloneqq 4(1-R(M))$ is our error function maximized over all values of the true parameter $\theta$.
Minimizing $\eta(M)$ amounts to solving a minimax problem;
we minimize over all POVMs and maximize over all possible true values of $\theta$.

As before, we denote the representation of $g \in \SU(2)$ on a spin-$j$ system ${\cal H}_j$ by $U_{j,g}$.
We say that the POVM $M$ is covariant if 
\begin{align}
U_{j,g}\int_B M(d \hat{\theta}) U_{j,g}^\dagger=
\int_{g B} M(d \hat{\theta}) 
\end{align}
for any subset $B\subset \Theta$ and $g \in \SU(2)$.
When a state $T$ satisfies the condition
$U_{j,g}TU_{j,g}^\dagger =T$ for $g \in \U(1) $,
we define 
the covariant POVM $M_T$ as
\begin{align}
M_T(B):= (2j+1) \int_{B} U_{j,g(\theta)}TU_{j,g(\theta)}^\dagger \nu(d \theta)
\label{BNNH}
\end{align}
for $B \subset \Theta=\SU(2)/\U(1)$,
where $g(\theta)$ is a representative element of 
$\theta \in \SU(2)/\U(1)$.
Any covariant POVM can be written as the above form.
For our situation, covariant POVMs $M_T$ have $T$ as an operator that is diagonal in the Fock basis.
If $B$ represents an infinitesimal ball about some $b\in \Theta$ and if $T$ represents a pure state $|\phi\>\<\phi|$, 
then $M_T(B)$ is proportional to the operator 
$ U_{j,g(b)}TU_{j,g(b)}^\dagger $,
which in corresponds to a projector onto the state $U_{j,g(b)} |\phi\>$.
Physically, the measurement of a covariant POVM $M_{|\phi\>\<\phi|}(B)$ corresponds to a projection onto the states $U_{j,g(\theta)}|\psi\>$ according to the measure $\nu$.

While the function $R(M)$ is more difficult to calculate than $R_\nu(M)$, 
the situation simplifies greatly 
when the optimal POVM $M$ is covariant.
In this situation,
the Bayesian error function can be equal to the worst-case error function in the sense that 
$1-R_\nu(M)=1-R(M)$ \cite{HOLEVO1979385}, \cite[Chapter 4]{Holevo}, \cite[Chapter 4]{Group2}.
This situation is possible when $R(g\theta,g\hat{\theta})=R(\theta,\hat{\theta})$ and when $\nu$ is invariant under any $g$.
In our scenario the POVM that maximizes $R_\nu(M)$ ($R(M)$)
is realized by a covariant POVM \cite{HOLEVO1979385}, \cite[Chapter 4]{Holevo}, \cite[Chapter 4]{Group2}.
In such a situation, we can calculate 
$R(M)$ as well as $R_\nu(M)$.

To minimize the error function, we use the idea of the addition of a spin-1/2 particle to a spin-$j$ particle,
\begin{align}
{\cal H}_{\frac{1}{2}} \otimes {\cal H}_j=
{\cal H}_{j+\frac{1}{2}} \oplus {\cal H}_{j-\frac{1}{2}},\label{BBY}
\end{align}
where $\mathcal H_j$ denotes the space for spin $j$.
We denote the projection to 
${\cal H}_{j+\frac{1}{2}} $ and $ {\cal H}_{j-\frac{1}{2}}$
by $P_{j+\frac{1}{2}} $ and $P_{j-\frac{1}{2}} $, respectively.

\begin{theorem}[\protect{\cite[Theorem 4.6]{Group2}}]\label{TH2}
When the relation
\begin{align}
&\frac{1}{2j+2}
\tr P_{j+\frac{1}{2}}
|\frac{1}{2};\frac{1}{2}\rangle \langle \frac{1}{2};\frac{1}{2}|\otimes \rho
\notag\\
\ge&
\frac{1}{2j}
\tr P_{j-\frac{1}{2}}
|\frac{1}{2};\frac{1}{2}\rangle \langle \frac{1}{2};\frac{1}{2}|\otimes \rho
\label{BFY}
\end{align}
holds under the relation \eqref{BBY},
the maximum of $R(M)$
%the maximum average of $R(\theta,\hat{\theta})$ 
is 
\begin{align}
\frac{2j+1}{2j+2}
\tr P_{j+\frac{1}{2}}
|\frac{1}{2};\frac{1}{2}\rangle \langle \frac{1}{2};\frac{1}{2}|\otimes \rho,
\end{align}
and the optimal measurement is given as $M_{|j;j\rangle \langle j;j|}$,
which is defined in \eqref{BNNH}, and comprises of a measure on pure states with maximum total angular momentum.
\end{theorem}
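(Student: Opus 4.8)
\emph{Proof plan.} The strategy is to use the group-covariant structure to collapse the minimax problem into the maximization of a single linear functional over diagonal states, and then to evaluate that functional explicitly through the Clebsch--Gordan decomposition \eqref{BBY}. By the covariance facts recalled above---$R(g\theta,g\hat{\theta})=R(\theta,\hat{\theta})$ together with the invariance of $\nu$---it suffices to maximize over covariant POVMs $M=M_T$ of the form \eqref{BNNH}, for which the worst-case and Bayesian rewards coincide, $R(M_T)=R_\nu(M_T)$. Here $T\ge 0$ is $\U(1)$-invariant, hence diagonal in $\{|j;m\rangle\}_{m=-j}^j$; moreover the completeness relation $M_T(\Theta)=I$, combined with Schur's lemma applied to the irreducible representation on ${\cal H}_j$, forces $\tr T=1$. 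Thus the optimization reduces to one over diagonal density operators $T$ on ${\cal H}_j$.

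Next I would reduce $R_\nu(M_T)$ to a linear functional of $T$. Substituting \eqref{BNNH} and using covariance to change variables in the double integral over $\Theta\times\Theta$ collapses it to a single integral, yielding $R_\nu(M_T)=\tr(TW)$ with
\begin{align}
W:=(2j+1)\int_\Theta \cos^2(|\theta|/2)\, U_\theta\rho U_\theta^\dagger\,\nu(d\theta),\notag
\end{align}
where I used $R(\theta,0)=\cos^2(|\theta|/2)$. Since $\cos^2(|\theta|/2)$ is $\U(1)$-invariant and $\rho$ is diagonal, $W$ commutes with $J_3$ and is therefore diagonal, so $\tr(TW)=\sum_m T_{mm}W_{mm}\le \max_m W_{mm}$, with equality attained by the pure state $T=|j;m^\ast\rangle\langle j;m^\ast|$ for any maximizing $m^\ast$.

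The core computation is to evaluate $W_{mm}$. Writing $\cos^2(|\theta|/2)=|\langle \frac12;\frac12| U_\theta |\frac12;\frac12\rangle|^2$ lets me fold the two extra matrix elements into the tensor space ${\cal H}_{\frac12}\otimes{\cal H}_j$, giving
\begin{align}
W_{mm}=(2j+1)\,\tr\big[(|\tfrac12;\tfrac12\rangle\langle\tfrac12;\tfrac12|\otimes|j;m\rangle\langle j;m|)\,\overline{A}\big],\notag
\end{align}
where $\overline{A}$ is the $\SU(2)$-twirl of $A:=|\frac12;\frac12\rangle\langle\frac12;\frac12|\otimes\rho$ under $V_g:=U_{\frac12,g}\otimes U_{j,g}$; the integral over $\Theta=\SU(2)/\U(1)$ extends to all of $\SU(2)$ because $A$ is $\U(1)$-invariant. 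Schur's lemma applied to the multiplicity-free decomposition \eqref{BBY} gives $\overline{A}=\frac{\tr(P_{j+1/2}A)}{2j+2}P_{j+\frac12}+\frac{\tr(P_{j-1/2}A)}{2j}P_{j-\frac12}$, while the Clebsch--Gordan coefficients for $\frac12\otimes j$ give the sector overlaps $\tr[(|\frac12;\frac12\rangle\langle\frac12;\frac12|\otimes|j;m\rangle\langle j;m|)P_{j+\frac12}]=\frac{j+m+1}{2j+1}$ and $\frac{j-m}{2j+1}$ for $P_{j-\frac12}$. Combining these yields
\begin{align}
W_{mm}=\frac{\tr(P_{j+1/2}A)}{2j+2}(j+m+1)+\frac{\tr(P_{j-1/2}A)}{2j}(j-m),\notag
\end{align}
which is affine in $m$.

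Finally, since $W_{mm}$ is affine in $m$, its maximum over $m\in\{-j,\dots,j\}$ is attained at $m=j$ exactly when the coefficient of $m$ is nonnegative, i.e.\ when $\frac{\tr(P_{j+1/2}A)}{2j+2}\ge \frac{\tr(P_{j-1/2}A)}{2j}$---which is precisely hypothesis \eqref{BFY}. Setting $m=j$ gives $W_{jj}=\frac{2j+1}{2j+2}\tr(P_{j+1/2}A)$, the asserted maximum, attained by $T=|j;j\rangle\langle j;j|$, that is by $M_{|j;j\rangle\langle j;j|}$. I expect the main obstacle to be the third step: correctly folding the scalar weight $\cos^2(|\theta|/2)$ and the probe state $\rho$ into the tensor-product representation and computing the Clebsch--Gordan overlaps, since it is exactly this computation that produces the affine-in-$m$ form and thereby explains the decisive role of the hypothesis \eqref{BFY}.
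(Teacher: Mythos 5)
Your proof is correct and follows essentially the same route as the paper's Appendix \ref{A-F}: restrict to covariant POVMs, fold the spin-$\frac12$ fidelity into ${\cal H}_{\frac12}\otimes{\cal H}_j$, apply Schur's lemma to the multiplicity-free decomposition \eqref{BBY}, and reduce to a linear optimization in which \eqref{BFY} is exactly the condition for the maximum to sit in the $j+\frac12$ sector. The only cosmetic difference is that you twirl $|\frac12;\frac12\rangle\langle\frac12;\frac12|\otimes\rho$ and optimize over the magnetic number $m$ of a diagonal seed, whereas the paper twirls the seed $\rho'$ and optimizes over its sector weights; the two parametrizations are equivalent via the same Clebsch--Gordan overlaps.
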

For reader's convenience, 
we give a proof for Theorem \ref{TH2} in Appendix \ref{A-F}
as a special case of \cite[Theorem 4.6]{Group2}.
The proof of Theorem \ref{TH2} uses the following ideas.
First, we represent the error function using a spin-1/2 representation 
while considering the estimation of a quantum state in the spin-$j$ representation.
Second, we apply the definition of $R(M)$ and use the fact that the product of traces is equal to the trace of the tensor products of the arguments,
and use Schur's lemma appropriately.

\if0
The optimal estimator
\begin{align}
\int_D \hat{\theta}_1
\Tr U_{\hat \theta} |\frac{1}{2};\frac{1}{2}\rangle
\langle \frac{1}{2};\frac{1}{2}| U_{\hat \theta}^\dagger
\Big[ |\frac{1}{2};\frac{1}{2}\rangle \langle \frac{1}{2};\frac{1}{2}|,
J_1\Big]
\end{align}
\fi

\subsection{Global estimation of a unitary channel}\label{Global-estimation}
Here we apply the theory reviewed in 
Section \ref{S:groupcovarianttheory} to calculate the minimum MSE for the global estimation of our unitary model.

\subsubsection{Probe state as a binomial distribution in the number basis}
First, we consider whether the condition \eqref{BFY} holds
when $p_m$ is given as a binomial distribution and $2j=n$.
In this case, the LHS of \eqref{BFY} is 
\begin{align}
&\frac{1}{n+2}
\sum_{m=-j}^{j}
\frac{j+m+1}{2j+1}
{n \choose j+m} p^{j+m}(1-p)^{j-m}\notag\\
=&
\frac{1}{n+2}\sum_{k=0}^{n}
\frac{k+1}{n+1}
{n \choose k} p^{k}(1-p)^{n-k}
=
\frac{1}{n+2}\frac{np+1}{n+1}.
\end{align}
The RHS of \eqref{BFY} is 
\begin{align}
&
\frac{1}{n}\sum_{m=-j}^{j}
\frac{j-m}{2j+1}
{n \choose j+m} p^{j+m}(1-p)^{j-m}\notag\\
=&
\frac{1}{n}\sum_{k=0}^{n}
\frac{n-k}{n+1}
{n \choose k} p^{k}(1-p)^{n-k}
=
\frac{1}{n}\frac{n(1-p)}{n+1}.
\end{align}

When $n$ goes to infinity, 
the limit of $n$ times of LHS of \eqref{BFY} equals $p$
and
the limit of $n$ times of RHS of \eqref{BFY} equals $1-p$.
When $p > 1/2$, with sufficiently large $n$, 
the condition \eqref{BFY} holds.

Then, 
the maximum of $R(M)$ is 
$(n+1)\frac{1}{n+2}\frac{np}{n+1}
=\frac{np}{n+2}$, which converges to $p$.
Hence the error function $\eta(M) = 4(1-R(M))$ converges to $4(1-p)$, which is strictly larger than $0$. 
Thus, we cannot make a precise global estimate of $\theta$ even with sufficiently large $n$, and the Cram\'er-Rao approach does not work well for the global estimation problem in this case.

\subsubsection{Probe state as a geometric distribution in the number basis}\label{geo-dis}

When $p_m$ is a geometric distribution
$\frac{r-1}{r^{2j+1}-1}r^{j+m}$,
the LHS of \eqref{BFY} is 
\begin{align}
&\frac{1}{n+2}
\sum_{m=-j}^{j}
\frac{j+m+1}{2j+1}
\frac{r-1}{r^{2j+1}-1}r^{j+m}\notag\\
=&
\frac{1}{n+2}\Big(
\frac{1}{n+1}
+
\frac{1}{n+1}
\big(\frac{n r^{n+1}+1}{r^{n+1}-1}-\frac{1}{r-1}\big)
\Big).
\end{align}
The RHS of \eqref{BFY} is 
\begin{align}
&
\frac{1}{n}\sum_{m=-j}^{j}
\frac{j-m}{2j+1}
\frac{r-1}{r^{2j+1}-1}r^{j+m}\notag\\
=&\frac{1}{n}\sum_{m=-j}^{j}
\frac{2j- (j+m)}{2j+1}
\frac{r-1}{r^{2j+1}-1}r^{j+m}\notag\\
=&\frac{1}{n}
\Big(\frac{n}{n+1}
-\frac{1}{n+1}\big(\frac{n r^{n+1}+1}{r^{n+1}-1}-\frac{1}{r-1}\big)
\Big).
\end{align}

When $r>1$ and 
$n$ goes to infinity, 
LHS of \eqref{BFY} approaches $1$ and 
and
RHS of \eqref{BFY} approaches $0$.
With sufficiently large $n$, the condition \eqref{BFY} holds.
Then, 
the maximum of $R(M)$ is 
$
\frac{n+1}{n+2}\Big(
\frac{1}{n+1}
+
\frac{1}{n+1}
(\frac{n r^{n+1}+1}{r^{n+1}-1}-\frac{1}{r-1})
\Big)
$, which converges to $1$, where
$R(M)$ is defined in \eqref{BNVX}.
When $M_n$ is the optimal estimator,  
we show in Appendix \ref{app:erfgeo} that the corresponding error function is
\begin{align}
\eta(M) = 
\frac{4r}{n(r-1)}
-\frac{8}{n^2(r-1)}
+O(n^{-3})
+O(r^{-n-1}).\label{NM7}
\end{align}
Therefore, 
the minimum error for the global estimate coincides with the RLD bound \eqref{BZV5}.
In this case, the Cram\'er-Rao approach works well for our global estimation problem.

\subsubsection{Probe state as a delta distribution in the number basis}

Next consider the case where $p_m = \delta_{a,m}$. 
Then the LHS of \eqref{BFY} is 
$\frac{1}{n+2} \frac{j+a+1}{2j+1}$.
The RHS of \eqref{BFY} is given by 
$\frac{1}{n}\frac{j-a}{2j+1}.$
The difference between the LHS and the RHS of \eqref{BFY} gives the expression
\begin{align}
&\frac{1}{n(n+2)}
\left( 
n (j+a+1) - (n+2)(j-a)
\right)\notag\\
=&
%\frac{1}{n(n+2)}
%\left( 
%n j+n a+ n 
%- n j + n a
%- 2j + 2a
%\right)\notag\\
%=&
\frac{1}{n(n+2)}
\left( 
2 (n+1) a + (n - 2j) 
\right). \label{Dicke-condition}
\end{align}
Since $n=2j$, the expression in \eqref{Dicke-condition}
tells us that \eqref{BFY} is equivalent to the inequality
\begin{align}
 a  \ge 0  .  
\end{align}
Hence whenever we have a state $|n/2;a\>$ with $a \ge 0$, the condition \eqref{BFY} holds, and the maximum of $R(M)$ is 
$\frac{n+1}{n+2} \frac{n/2+a+1}{n+1}$.
In the limit of large $n$ becomes large, this maximum $R(M)$ 
becomes $\frac{1}{2} + \frac{a}{n}$.
For positive $a$, this $R(M)$ is at least $\frac{1}{2}$, and is bounded away from zero. Hence the global estimation strategy for such states in the basis $\mathcal B_n$ has a constant error.
In contrast, the local estimation strategy has MSE that scales as $O(1/n^2)$.
Hence the Cram\'er-Rao approach does not work well for the global estimation problem in this case.

As an example, we may consider the probe state given by $|n/2;0\>$ which corresponds to using a half-Dicke state for distinguishable spins. 
The state $|n/2;0\>$, 
commonly discussed as a quantum probe state that we can use for a quantum advantage in quantum sensing \cite{PhysRevA.85.022321,tothPhysRevA.85.022322}, can be prepared for instance in the procedure described in Ref~\cite{lucke2011twin,halfdicke}.
According to \eqref{NMI}, $F_{1,1}=F_{2,2} = 8j(j+1) = 2n(n+2)$,
and the tight CR bound scales as $O(1/n^2)$. 
However, for global estimation strategies, the minimum MSE is a constant, because  $R(M)=1/2$.
Hence under global estimation strategies, the half-Dicke state loses its quantum advantage in sensing.

\section{Discussion}\label{S6}

We have shown that that there are situations where for the state estimation problem,
the minimum MSE obtained from the CR approach 
is accurate for the error function obtained for global estimation strategies. 
We have also shown that the opposite can be true;
namely, there are situations where for the state estimation problem,
the minimum MSE obtained from the CR approach 
is very different from the error function obtained for global estimation strategies. 
The most striking difference between the minimum MSE obtained from the CR approach and the minimum error function from global estimation 
is the situation of estimating a unitary model
with the probe state $|n; n/2\>_{B} = |n/2; 0\>$
In the context of local estimation, there is a Heisenberg scaling in the minimum MSE if we use $|n/2; 0\>$ as the probe state for this unitary model.
However in the limit of large $n$, we show that this Heisenberg scaling vanishes for global estimation strategies.
Our results recommends that caution must be exercised if we wish to use CR bounds in the context of global estimation.

In the case of unitary estimation of a single parameter, it is known that
the optimal Cram\'{e}r-Rao type bound cannot be attained with global estimation \cite{Ha06,Ha11-2,MH16-9}. 
The papers \cite{Ha06,Ha11-2,MH16-9} consider the optimization of the initial state for the estimation of the unitary.
However, this paper considers the state estimation with a fixed
initial state.
Furthermore, the paper \cite{Ha11-2} showed that
the optimal Cram\'{e}r-Rao type bound cannot be attained 
even under the problem of local minimax estimation even under the 
setting 
of asymptotically many probe states used for global estimation of unitary channels.
This phenomenon relates to the Heisenberg scaling under the unitary estimation.
In unitary estimation, 
while the optimal Cram\'{e}r-Rao type bound has the same order as
the error function for optimum minimax estimation \cite{F02,IF07,Ha06,Ha11-2,MH16-9},
they differ in the coefficients of their leading order terms.
Moreover, it was shown that the input state of 
the optimal Cram\'{e}r-Rao type bound and the optimum minimax estimation
are different.
For example, although the noon state realizes
the optimal Cram\'{e}r-Rao type bound, it does not work for global estimation \cite[Section VI]{mh-photon-constant}.
Our results add to the literature of examples where 
the behavior of Cram\'{e}r-Rao type bounds differs from
the optimum minimax estimation, particularly with regards to the quantum estimation of bosonic states both in a single-parameter and a multi-parameter setting.

Our work is also related to the question as to 
whether a family of bosonic states which embed parameters can have a
Cram\'{e}r-Rao type bound that can be attained in the single copy setting.
One family of quantum states that we considered is 
the state family which is a geometric distribution in the Fock basis.
In Ref.~\cite[Section IV]{HM08} showed that this state family approximates a
quantum Gaussian state family.
Moreover, in the setting of multiple identical and independently distributed copies,
this geometric state family converges to the quantum Gaussian state family \cite{KG,Kahn,two4}.
Given that the RLD bound can be attained under the single copy setting
for the quantum Gaussian state family,
we can see that our problem relates to the question as to 
whether
the state family of our interest also attains the Cram\'{e}r-Rao type bound in the single copy setting.
We leave this line of enquiry for future work. 

%Recently, the discrepancy between the quantum Fisher information and the classical Fisher information was discussed on a two-mode bosonic system with certain physical constraints was observed \cite{uwePhysRevLett.132.240803}. 
%This leaves open the question as to what the optimal MSE would be in both in the local and global estimation setting for this type of physical system with constrained measurements.

Quantum state estimation for two interacting modes in a double well system has been studied, and in the context of local estimation theory, measurements using the widely considered two-mode interferometer was shown to be strictly suboptimal \cite{uwePhysRevLett.132.240803}. 
This is an interesting result because although the two-mode interferometer is often optimal for the local estimation, the two-mode interferometer with fixed orbitals is not optimal in the situation of Ref. \cite{uwePhysRevLett.132.240803} using the usual two-mode interferometer with fixed orbitals; two-mode interferometry in this case cannot capture the Fisher information correctly at all, not even approximately.
For future work, it would be interesting to consider the question of whether the two-mode interferometer with fixed orbitals is still suboptimal in the global estimation setting.

We also like to comment on the relationship between the quantum state estimation problems that we have studied with the research direction of permutation-invariant quantum tomography \cite{pitomograph-PhysRevLett.105.250403,moroder2012permutationally,PhysRevLett.113.040503,PhysRevLett.113.040503}.
These papers discuss state tomography over the symmetric subspace;
namely linear tomography on only the symmetric subspace by using Dicke states as orthogonal basis is considered.
In contrast, our paper considers the state estimation under the assumption that the true state belongs to the unitary orbit of a Dicke state.
Additionally, our paper allows more types of measurements, namely any covariant measurement (which allows us to employs the group covariance method). 
In particular, our theoretical analysis shows our method to be optimal in the global estimation setting.
Since our method allows more types of measurements, our estimator has better performance than one that projects only projects onto Dicke states.

Next, we discuss the global attainability of the Cramér-Rao bound, which represents the minimum of the mean squared error (MSE) sum under the local unbiasedness condition.
In the single-parameter case, 
the Cramér-Rao bound coincides with
the optimal error under the local unbiasedness condition
if and only if the state family is an exponential family in the sense of SLD (symmetric logarithmic derivative) \cite{nagaoka89selectedpapers,nagaoka87selectedpapers}\cite[Theorem 6.7]{hayashi2016quantum}. 
In the multiple-parameter case, 
it is difficult to obtain the same assertion due to the complexity of problems associated with non-commutativity in minimizing the MSE sum under the local unbiasedness condition. 
However, its connection with exponential families has been discussed in recent literature \cite{Nagaoka23}. At least in the classical case, under certain regularity conditions on the distribution family, being an exponential family is equivalent to the Cramér-Rao bound being globally attainable.

It is conceivable that as the state family approaches an exponential family, the Cramér-Rao bound becomes globally attainable. The quantum Gaussian states family is shown to be a non-commutative exponential family \cite[Section 6]{Nagaoka23}. 
For instance, in state estimation, the Cramér-Rao bound is globally attainable when the number of copies is large. 
As shown in \cite{KG,Kahn,two4}, in this case, it asymptotically approaches the quantum Gaussian states family, albeit locally. Furthermore, the state family treated in Subsection \ref{geo-dis} also asymptotically approaches the quantum Gaussian states family, as shown in \cite[Theorem 7]{HM08} in the limit of large $j$. 
Therefore, as confirmed in \eqref{NM7}, the Cramér-Rao bound is asymptotically globally attainable. 
In other words, it is safer to assume that the Cramér-Rao bound is not globally attainable unless the situation approaches an exponential family.

Finally, we discuss the relation between the unbiased condition and the group covariant approach discussed in Section \ref{Global-estimation}. Since our state space forms a compact set, 
any coordinate cannot cover the whole of the parameter space smoothly. 
Now, we employ the parameter space 
$\{(\theta_1,\theta_2)| |\theta| \le \pi \}$,
and this space has discontinuity at the state 
$|\frac{1}{2};-\frac{1}{2}\rangle \langle \frac{1}{2};-\frac{1}{2}|$.
Due to this issue, any covariant estimator including our optimal estimator has bias except for the origin $\theta=0$ under this parametrization because the center of the measurement outcome is shifted to the direction of the origin.
Since the Cramér-Rao approach employs only the locally unbiased condition and Section \ref{local:estimation:S4} discusses the error at the origin, it is more important whether 
our optimal estimator satisfies the locally unbiased condition. 
However, it is difficult to check whether 
a covariant estimator satisfies the locally unbiased condition
because the quantities appearing in this condition 
cannot be easily handled in the group covariant framework.
Nevertheless, the state family treated in Subsection \ref{geo-dis} also asymptotically approaches the quantum Gaussian states family 
in the limit of large $j$, as mentioned the above.
Since the optimal covariant estimator for the quantum Gaussian states family satisfies the unbiasedness condition \cite[Chapter 6]{Holevo}, 
the optimal covariant estimator asymptotically
satisfies the unbiasedness condition in this case.

\section*{Acknowledgement}
MH is supported in part by the National Natural Science Foundation of China (Grant No. 62171212),
and 
the General R\&D Projects of 
1+1+1 CUHK-CUHK(SZ)-GDST Joint Collaboration Fund 
(Grant No. GRDP2025-022).
YO acknowledges support from EPSRC (Grant No. EP/W028115/1).

\appendix

\if0
\section{Preparation of probabilistic mixtures of states in the basis $\mathcal B_n$}
Let us define $\Pi^{\rm sym} $ as a projector onto the symmetric subspace of $n$ qubits. 
Then we have the following lemma.
 \begin{lemma}[{\cite[Lemma 17]{HO2024}}]
 %[]
\label{lem:Jkpi}
Let $r_w := \frac{1}{2} \sqrt{(n-w)(w+1)}$. Then
\begin{align}
\Pi^{\rm sym}{ J_3} \Pi^{\rm sym}&= \sum_{w=0}^{n} \left(\frac{n}{2} - w\right) |n-w,w\>_{\mathcal B}\<n-w,w|_{\mathcal B}, \label{E3 sym}\\
\Pi^{\rm sym} {J_1}\Pi^{\rm sym} &= \sum_{w=0}^{n-1}r_w
 \left(|n-w,w\>_{\mathcal B}\<n-w-1,w+1|_{\mathcal B} + |n-w-1,w+1\>_{\mathcal B}\<n-w,w|_{\mathcal B} \right) ,  \label{E1 sym}\\
\Pi^{\rm sym}{J_2}\Pi^{\rm sym} &= \sum_{w=0}^{n-1}r_w
 \left(-i |n-w,w\>_{\mathcal B}\<n-w-1,w+1|_{\mathcal B} +i |n-w-1,w+1\>_{\mathcal B}\<n-w,w|_{\mathcal B} \right). \label{E2 sym}
\end{align}
\end{lemma}

Next we define the operators
\begin{align}
\bar J_+  \coloneqq \Pi^{\rm sym} J_+ \Pi^{\rm sym} 
= \sum_{w=0}^{n-1} r_w |n-w-1,w+1\>_{\mathcal B}\<n-w,w|_{\mathcal B}
\\
\bar J_-  \coloneqq 
\Pi^{\rm sym} J_- \Pi^{\rm sym} =
\sum_{w=0}^{n-1} r_w |n-w,w\>_{\mathcal B}\<n-w-1,w+1|_{\mathcal B}.
\end{align}
First we have a result on how to prepare a geometric distribution of bosons in the number basis.
\begin{lemma}\label{lem:geometric}
Let $r >0$, and $a = (1+r+\dots + r^n)^{-1}$.
Let $\rho = \sum_{w=0}^n p_w |n-w,w\>_{\mathcal B}\<n-w,w|_{\mathcal B} $ with
$p_{w} = a r^w$ for all $ w = 0,\dots, n$.
Then $\mathcal D(\rho) = 0$. Furthermore for any symmetric density matrix $\tau$, 
we have
$\lim_{t \to \infty} e^{t \mathcal D } \tau  = \rho.$
\end{lemma}
\begin{proof}
We also have 
\begin{align}
\bar J_+^\dagger \bar J_+ &= \sum_{w=0}^{n-1} r_w^2 |n-w,w\>_{\mathcal B}\<n-w,w|_{\mathcal B}\\
\bar J_-^\dagger \bar J_- &= \sum_{w=0}^{n-1} r_w^2 |n-w-1,w+1\>_{\mathcal B}\<n-w-1,w+1|_{\mathcal B}.
\end{align}

Since for any symmetric state $|\psi\>$, 
the states $J_+|\psi\>$ and $J_-|\psi\>$ are also symmetric,
we have
\begin{align}
J_+ \rho J_+^\dagger
= \bar J_+ \rho \bar J_+^\dagger  &= 
 \sum_{w=0}^{n-1} r_w^2 p_w |n-w-1,w+1\>_{\mathcal B} \<n-w-1,w+1|_{\mathcal B}\\
 J_- \rho J_-^\dagger 
 =
\bar J_- \rho \bar J_-^\dagger
 &=
  \sum_{w=0}^{n-1} r_w^2 p_{w+1} |n-w,w\>_{\mathcal B} \<n-w,w|_{\mathcal B}.
\end{align}
Hence we have
\begin{align}
J_+ \rho J_+^\dagger - \frac{1}{2}J_+^\dagger J_+\rho - \frac{1}{2}\rho  J_+^\dagger J_+
= \sum_{w=0}^{n-1} r_w^2 p_w |n-w-1,w+1\>_{\mathcal B} \<n-w-1,w+1|_{\mathcal B}
- \sum_{w=0}^{n-1} r_w^2 p_w |n-w,w\>_{\mathcal B}\<n-w,w|_{\mathcal B},
\end{align}
and
\begin{align}
 J_- \rho J_-^\dagger - \frac{1}{2} J_-^\dagger J_-  \rho - \frac{1}{2} \rho J_-^\dagger J_- 
 =  \sum_{w=0}^{n-1} r_w^2 p_{w+1} |n-w,w\>_{\mathcal B} \<n-w,w|_{\mathcal B}
 - \sum_{w=0}^{n-1} r_w^2 p_{w+1} |n-w-1,w+1\>_{\mathcal B}\<n-w-1,w+1|_{\mathcal B}.
\end{align}

The set of equations that are consistent with $\mathcal D(\rho) = 0$ are
\begin{align}
-a_+ r^2_0 p_0 + a_- r^2_0 p_1 &= 0\\
a_+ ( r^2_{w-1} p_{w-1} - r^2_w p _w) 
+
a_ - ( r^2_w p_{w+1} - r^2_{w-1} p_w)
   &=0 \quad ,w = 1,\dots, n-1\\
a_+ r^2_{n-1} p_{n-1} - a_- r^2_{n-1}p_n &= 0 .
\end{align}
To simplify the solution of these equations, let $r =a_+/ a_- $ for some real number $r$.
Then we need to solve
\begin{align}
    p_1 &= r p_0 \\
   p_{w+1}   
   &=  (n + r^2_{w-1}/r^2_w) p _w - r (r^2_{w-1}/r^2_w) p_{w-1} 
  \quad ,w = 1,\dots, n-1\\
  p_{n }    &=r  p_{n-1} ,
\end{align}
and the solution of these equations clearly gives $p_{w} = a r^w$ for $w = 0,\dots, n$.

Next we write the left eigenvalue equations of the operator $\mathcal D$ as 
$\mathcal D x = u x$.
Since $\mathcal D$ generates a semigroup, and $e^{t \mathcal D}\tau$ is a density matrix for all $t\ge 0$, when we write $\tau$ as a linear combination of the eigenvectors of $\mathcal D$, the corresponding eigenvalues must all be non-negative for $e^{t \mathcal D}\tau$ to have bounded norm. 
Hence as $t \to \infty$, the density matrix $e^{t \mathcal D}\tau$ converges to eigenvectors of $\mathcal D$ for which the eigenvalue $u$ is zero. Since $\mathcal D(x) = 0$ has a unique solution as we shown above, this implies that $\lim_{t \to \infty }e^{t \mathcal D}\tau  =\rho$.
\end{proof}

\fi 

\section{Derivation of \eqref{fidelity2}} \label{AA1}
We have
\begin{align}
&U_{\theta} := \exp(i (\theta_1 J_1+\theta_2 J_2)) %\notag\\
=\exp \left(
\begin{pmatrix}
0 & \frac{i \theta_1+\theta_2}{2} \notag\\
\frac{i \theta_1-\theta_2}{2} & 0 
\end{pmatrix}
\right) \notag\\
%=&\exp \left(\left( \begin{array}{cc}
%0 & \frac{i( \theta_1-i\theta_2)}{2} \\
%\frac{i( \theta_1+i\theta_2)}{2} & 0 
%\end{array}\right)\right) \notag\\
=&\exp \left(
\left(
\begin{array}{cc}
0 & ie^{-i\phi} \frac{|\theta|}{2}  \\
ie^{i\phi} \frac{|\theta|}{2} & 0 
\end{array}
\right)
\right) \notag\\
=&\exp \left(
\left(
\begin{array}{cc}
1 &   0\\
0& -ie^{i\phi} 
\end{array}
\right)
\left(
\begin{array}{cc}
0 &  \frac{|\theta|}{2}  \\
- \frac{|\theta|}{2} & 0 
\end{array}
\right)
\left(
\begin{array}{cc}
1 &  0 \\
0& ie^{-i\phi} 
\end{array}
\right)
\right) \notag\\
=&
\left(
\begin{array}{cc}
1 &   0\\
0& -ie^{i\phi} 
\end{array}
\right)
\exp \left(
\left(
\begin{array}{cc}
0 &  |\theta|  \\
- |\theta| & 0 
\end{array}
\right)
\right)
\left(
\begin{array}{cc}
1 &  0 \\
0& ie^{-i\phi} 
\end{array}
\right) \notag\\
=&
\left(
\begin{array}{cc}
1 &   0\\
0& -ie^{i\phi} 
\end{array}
\right)
\left(
\begin{array}{cc}
\cos \frac{|\theta|}{2} & \sin \frac{|\theta|}{2}  \\
-\sin \frac{|\theta|}{2} & \cos \frac{|\theta|}{2}
\end{array}
\right)
\left(
\begin{array}{cc}
1 &  0 \\
0& ie^{-i\phi} 
\end{array}
\right) \notag\\
=&
\left(
\begin{array}{cc}
\cos \frac{|\theta|}{2} & ie^{-i\phi}  \sin \frac{|\theta|}{2}  \\
ie^{i\phi} \sin \frac{|\theta|}{2} & \cos \frac{|\theta|}{2}
\end{array}
\right).
\end{align}
Thus,
\begin{align}
&U_{\theta} |\frac{1}{2};\frac{1}{2}\rangle
=\cos \frac{|\theta|}{2}|\frac{1}{2};\frac{1}{2}\rangle
+ie^{i\phi}\sin \frac{|\theta|}{2}  |\frac{1}{2};-\frac{1}{2}\rangle .
\end{align}
Hence, 
\begin{align}
R(\theta,\hat{\theta})=&|\langle \frac{1}{2};\frac{1}{2}| U_{\theta}^\dagger U_{\hat{\theta}} |
\frac{1}{2};\frac{1}{2}\rangle|^2 \notag\\
=&\Big|
\cos \frac{|\theta|}{2} \cos \frac{| \hat{\theta}|}{2}
+e^{i(-\phi +\hat{\phi})} \sin \frac{|\theta|}{2}\sin 
\frac{|\hat{\theta}|}{2}
\Big|^2.
\end{align}

\section{Local estimation under the general unitary model}\label{A1}
To show several relations in Section \ref{S4},
we consider the spin $j$ system ${\cal H}_j$ spanned by 
$\{|j;m\rangle\}_{m=-j}^j$.
For simplicity, if $j$ is clear from the context, we denote 
$|j;m\rangle$ as $|m\rangle$.

We have two operators as
\begin{align}
J_+:
=&\sum_{m=-j}^{j-1}  \sqrt{j(j+1)-m(m+1)}|m+1\rangle \langle m|\notag\\
=&\sum_{m=-j}^{j-1}  \sqrt{(j-m)(j+m+1)}|m+1\rangle \langle m|\\
J_-:=&\sum_{m=-j}^{j-1}  \sqrt{j(j+1)-m(m+1)}|m\rangle \langle m+1|\notag\\
=&\sum_{m=-j}^{j-1}  \sqrt{(j-m)(j+m+1)}|m\rangle \langle m+1|.
\end{align}
We write the angular momentum operators as
\begin{align}
J_1:=&\frac{1}{2}(J_++J_-), ~
J_2:=\frac{1}{2i }(J_+-J_-) \\
J_3:= &\sum_{m=-j}^{j}  m|m\rangle \langle m|.
\end{align}
Then, the Casimir element $C$ is given as
\begin{align}
C:=&\sum_{k=1}^3 J_k^2
=\frac{1}{4}((J_++J_-)^2-(J_+-J_-)^2)+J_3^2 \notag\\
=&\frac{1}{2}(J_+ J_- +J_-J_+)+J_3^2
= j(j+1).
\end{align}

Now, we consider the following case
\begin{align}
\rho&:=\sum_{m=-j}^{j}  p_m|m\rangle \langle m| \\
D_1&:=i [\rho,J_1],~
D_2:=i [\rho,J_2].
\end{align}
Then, we have
\begin{align}
i [\rho, J_1]&= \frac{i}{2}[\rho, (J_+ + J_-)]
\notag\\
&= 
\frac{i}{2} (-J_+\rho+ \rho J_+ -J_-\rho +\rho J_- )\\
i [\rho, J_2]&= \frac{1}{2}[\rho, (J_+ - J_-)]
\notag\\
&= 
\frac{1}{2} (-J_+\rho+ \rho J_+ +J_-\rho -\rho J_- ).
\end{align}

We define
\begin{align}
K_+:= &
\sum_{m=-j}^{j-1}  \frac{2(p_{m+1}-p_m)}{p_{m+1}+p_m }
\notag\\
&\quad \quad \times 
\sqrt{(j-m)(j+m+1)}|m+1\rangle \langle m|\\
K_-:= &
\sum_{m=-j}^{j-1}  \frac{2(p_{m+1}-p_m)}{p_{m+1}+p_m }
\notag\\
&\quad \quad \times 
\sqrt{(j-m)(j+m+1)}|m\rangle \langle m+1|.
\end{align}
Then, we have
\begin{align}
\rho \circ (\frac{1}{2}(K_++K_-))=& i [\rho, J_2] =D_2\label{N1}\\
\rho \circ (-\frac{1}{2i }(K_+-K_-))= &i [\rho, J_1] =D_1.\label{N2}
\end{align}
Thus, 
the SLDs $L_1$ and $L_2$ of the first and second parameters
are calculated as 
\begin{align}
{L}_2:=&\frac{1}{2}(K_++K_-), ~
{L}_1:=-\frac{1}{2i }(K_+-K_-) .
\end{align}
Thus, we have
\begin{align}
&F_{1,1}= \tr \rho L_1^2 \notag\\
=&
\sum_{m=-j}^{j-1}  \Big(
\frac{4(p_{m+1}-p_m)^2}{(p_{m+1}+p_m)^2 }
(j-m)(j+m+1) p_m\notag\\
&+
\frac{4(p_{m+1}-p_m)^2}{(p_{m+1}+p_m)^2 }
(j-m)(j+m+1) p_{m+1} \Big)\notag\\
=&
\sum_{m=-j}^{j-1}  
\frac{4(p_{m+1}-p_m)^2}{p_{m+1}+p_m }
(j-m)(j+m+1) ,\\
&F_{2,2}= \tr \rho L_2^2 \notag\\
=&
\sum_{m=-j}^{j-1}  
\frac{4(p_{m+1}-p_m)^2}{p_{m+1}+p_m }
(j-m)(j+m+1) ,\\
&F_{1,2}= \tr \rho \tilde{L}_1 \circ \tilde{L}_2=0,
\end{align}
which shows \eqref{NMI}.

In particular, when the distribution $\{p_m\}$ is a geometric distribution
$p_{G,m}=
\frac{r-1}{r^{j+1}-r^{-j}}r^{m}$,
we have
\begin{align}
\frac{2(p_{m+1}-p_m)}{p_{m+1}+p_m }
=
\frac{2(r^{m+1}-r^{m})}{r^{m+1}+r^{m} }
=
\frac{2(r-1)}{r+1},
\end{align}
which implies that
\begin{align}
K_+=\frac{2(r-1)}{r+1} J_+,~
K_-=\frac{2(r-1)}{r+1} J_-.
\end{align}
Using \eqref{N1} and \eqref{N2}, we have
\begin{align}
\rho \circ L_2= i [\rho, \frac{r+1}{2(r-1)}L_1], ~
\rho \circ L_1= i [\rho, \frac{r+1}{2(r-1)}L_2] .
\end{align}
These relations guarantee the D-invariance.

\begin{widetext}
\section{Local estimation with the binomial distribution: Proof of \eqref{BZV}}

We recall the Clebsch–Gordan formula, which for $2j=n$ gives
\begin{align}
&\langle j+\frac{1}{2}; m+\frac{1}{2}
|\frac{1}{2},\frac{1}{2};j,m\rangle^2  
=\frac{(2j+2)(2j)! (j+m)!(j-m)!(j+m+1)!(j-m)!}
{(2j+2)! ((j+m)!  (j-m)!)^2}\notag\\
=&
\frac{(2j+2) (j+m+1)}
{(2j+2)(2j+1)}
=
\frac{(j+m+1)}{(2j+1)}.
\end{align}
Now we also consider 
$p_m= {n \choose j+m} p^{j+m}(1-p)^{j-m}$.

Then, we have
\begin{align}
F_{1,1}
= &
\sum_{m=-j}^{j-1}  
\frac{4(p_{m+1}-p_m)^2}{p_{m+1}+p_m }
(j-m)(j+m+1) \notag \\
= &
\sum_{m=-j}^{j-1}  
\frac{4({n \choose j+m+1} p^{j+m+1}(1-p)^{j-m-1}
-{n \choose j+m} p^{j+m}(1-p)^{j-m})^2}{
{n \choose j+m+1} p^{j+m+1}(1-p)^{j-m-1}
+{n \choose j+m} p^{j+m}(1-p)^{j-m}}\notag\\
&\cdot(j-m)(j+m+1) \notag\\
= &
\sum_{m=-j}^{j-1}  
\frac{4
(\frac{j-m}{j+m+1}\frac{p}{1-p}-1)^2
{n \choose j+m} p^{j+m}(1-p)^{j-m}}
{\frac{j-m}{j+m+1}\frac{p}{1-p}+1}
(j-m)(j+m+1) \notag\\
= &
\sum_{m=-j}^{j-1}  
\frac{4
((j-m)\frac{p}{1-p}+(j+m+1))^2
{n \choose j+m} p^{j+m}(1-p)^{j-m}}
{(j-m)\frac{p}{1-p}+(j+m+1)}
(j-m)\notag\\
= &
\sum_{m=-j}^{j-1}  
\frac{4
((j-m)p-(j+m+1)(1-p))^2
{n \choose j+m} p^{j+m}(1-p)^{j-m-1}}
{(j-m)p+(j+m+1)(1-p)}
(j-m)\notag\\
= &
\sum_{m=-j}^{j-1}  
\frac{4
(-(j+m+1) +(2j+1)p)^2
{n \choose j+m} p^{j+m}(1-p)^{j-m-1}}
{j +m+1 -(2m+1)p}
(j-m)\notag\\
= &
\sum_{k=0}^{n-1}  
\frac{4
(-(k+1) +(n+1)p)^2
{n \choose k} p^{k}(1-p)^{2n-k-1}}
{k+1 -(2k-n+1)p}
(n-k)\notag\\
= &
n^2 \sum_{k=0}^{n-1}  
\frac{4
(-\frac{k+1}{n} +(1+\frac{1}{n})p)^2
{n \choose k} p^{k}(1-p)^{2n-k-1}}
{ \frac{k+1}{n} -(\frac{2k}{n}-1+\frac{1}{n})p}
(1-\frac{k}{n}).
\end{align}
Now we interpret the index $k$ as a measurement outcome we obtain from measuring the state in the Fock basis, and $Y=k/n$ as the corresponding random variable. Then we can write $F_{1,1}$ in terms of $Y$ to get
\begin{align}
F_{1,1}
=&
n^2 
\mathbb{E}_{p, \frac{p(1-p)}{n} }
\Big[
\frac{4
(-Y-\frac{1}{n} +(1+\frac{1}{n})p)^2
(1-p)^{-1}(1-Y)}
{ Y+\frac{1}{n} -(2Y -1+\frac{1}{n})p}
\Big],
\end{align}
where the first subscript on the expectation denotes the mean of $Y$, and the second subscript denotes the variance of $Y$.
We then define the random variable $X=\sqrt{n}(Y-p)$, and get
\begin{align}
F_{1,1}
=&n^2
\mathbb{E}_{0, p(1-p)}
\Big[
\frac{4
(-p-\frac{X}{\sqrt{n}} +(1+\frac{1}{n})p)^2
(1-p)^{-1}(1-p-\frac{X}{\sqrt{n}})}
{ p+\frac{X}{\sqrt{n}}+\frac{1}{n} -(
2p+2\frac{X}{\sqrt{n}}-1+\frac{1}{n})p}
\Big] \notag\\
= & 
n^2
\mathbb{E}_{0, p(1-p)}
\Big[
\frac{4
(\frac{X}{\sqrt{n}} -\frac{p}{n})^2
(1-p)^{-1}(1-p-\frac{X}{\sqrt{n}})}
{-2p^2+2p +(1-2p) \frac{X}{\sqrt{n}}+\frac{1-p}{n}
}
\Big]\notag\\
\cong& 
n^2
\mathbb{E}_{0, p(1-p)}
\Big[
\frac{2 \frac{X^2}{n}
(1-p)^{-1}(1-p)}
{p(1-p)}
\Big]
=
n
\mathbb{E}_{0, p(1-p)}
\Big[
\frac{2 X^2
}
{p(1-p)}
\Big]=2n,
\end{align}
where the congruent symbol indicates an approximation in the limit of large $n$.
Hence, we obtain \eqref{BZV}.

\section{Local estimation with the geometric distribution}\label{A-D}
Assume that 
$p_m= \frac{r-1}{r^{2j+1}-1}r^{j+m}$.
Since 
\be
\frac{2(r^{j+m+1}-r^{j+m})}{r^{j+m+1}+r^{j+m}}=
\frac{2(r-1)}{r+1},
\ee
we have
\begin{align}
K_+= &
\sum_{m=-j}^{j-1} \frac{2(r-1)}{r+1}
\sqrt{(j-m)(j+m+1)}|m+1\rangle \langle m|
=\frac{2(r-1)}{r+1}J_+
\\
K_-= &
\sum_{m=-j}^{j-1} \frac{2(r-1)}{r+1}
\sqrt{(j-m)(j+m+1)}|m\rangle \langle m+1|
=\frac{2(r-1)}{r+1}J_-.
\end{align}
Thus, we have
\begin{align}
{L}_2=&\frac{2(r-1)}{r+1}J_1, ~
{L}_1=-\frac{2(r-1)}{r+1}J_2.
\end{align}

Thus, we have
\begin{align}
&F_{1,1}= F_{2,2}=
\sum_{m=-j}^{j-1}  
\frac{4(p_{m+1}-p_m)^2}{p_{m+1}+p_m }
(j-m)(j+m+1) \notag\\
=&
\sum_{m=-j}^{j-1}  
4 p_m (r-1)\frac{2(r-1)}{r+1}
(j-m)(j+m+1) 
\cong 
\frac{n}{2} \frac{2(r-1)}{r+1}.
\end{align}

Relation \eqref{N1} and \eqref{N2} are rewritten as
\begin{align}
-\rho \circ \frac{2(r+1)}{r-1}J_2= i [\rho, J_1] ,\quad
\rho \circ \frac{2(r+1)}{r-1}J_1= i [\rho, J_2].
\end{align}
Thus, this model satisfies the D-invariant condition.
The matrix $D=(D_{j,k})$ is approximated to
\begin{align}
\frac{n}{2} \frac{2(r-1)}{r+1}
\cdot
\frac{2(r-1)}{r+1}
\left(
\begin{array}{cc}
0 & 1 \\
-1 & 0
\end{array}
\right).
\end{align}
Thus
\begin{align}
C^{\rm R}_{\bm{\theta}}[I]=\Tr F^{-1}+\frac{1}{2}\Tr |F^{-1}D F^{-1} | %\notag\\
\cong
\frac{2(r+1)}{n(r-1)} +\frac{2}{n}
=\frac{2(r+1)+2(r-1)}{n(r-1)}
=\frac{4r}{n(r-1)},
\end{align}
which implies \eqref{BZV5}.
\if0

\section{Geometric distribution case}
We have 
$p_{m} = a r^{m+j}. $
Previously we had
\begin{align}
J_{1,1} = &J_{2,2}=
\sum_{m=-j}^{j-1}  
\frac{4(p_{m+1}-p_m)^2}{p_{m+1}+p_m }
(j-m)(j+m+1) .
\end{align}
Now
\begin{align}
\frac{4(p_{m+1}-p_m)^2}{p_{m+1}+p_m }
=
\frac{4 a (r-1)^2}{r+1 }r^{m+j}.
\end{align}
Next note that 
\begin{align}
&\sum_{m=-j}^{j-1}  
r^{m+j} (j-m)(j+m+1) \\
=& \frac{2 j}{(r-1)^3}
\left(- r^{2j + 1} ( 1+ 1/j) +   r^{2j + 2} +   r(1+1/j)  - 1  
\right).
\end{align}
When $r < 1$, which is the scenario when loss errors dominate gain errors, 
for $j=2n$ and for large $n$, we have
\begin{align}
&\sum_{m=-j}^{j-1}  
r^{m+j} (j-m)(j+m+1) \\
\sim & 
\frac{2 j}{(1-r)^3}
\left(  - r +  1  \right)\\
\sim & 
\frac{2 j (1-r)}{(1-r)^3}
= 
\frac{n }{(1-r)^2}.
\end{align}
Hence
\begin{align}
J_{1,1}& = J_{2,2} \sim 
\frac{4a(r-1)^2}{r+1}
\frac{n}{(1-r)^2}\\
&=
\frac{4a n }{1+r}.
\end{align}
When $j$ is large $a \sim (\sum_{u=0}^\infty r^u)^{-1} = (1 + 1/(1-r))^{-1} = (1+r)/(1-r)$.
Then we have
\begin{align}
J_{1,1}& = J_{2,2} \sim 
\frac{4 n }{1-r}.
\end{align}

\section{General one-parameter case}
Assume that
\begin{align}D_1=a S_1+b S_3=\frac{ai}{2p-1}[\rho, S_2]+b S_3.
\end{align}
The SLD of $\{\rho_\theta^{(n),s}\}_\theta$
is written as
\begin{align}
L_1^{(n),s}:=
\frac{a}{2p-1}\tilde{L}_1
+b \sum_{m=-j}^j 
\frac{m}{p_m}|m\rangle \langle m|.
\end{align}
Then,
\begin{align}
&\Tr \rho (L_1^{(n),s})^2 \\
=&
\frac{a^2}{(2p-1)^2}\Tr \rho 
(\tilde{L}_1)^2
+b^2 
\Tr \rho 
(\sum_{m=-j}^j 
\frac{m}{p_m}|m\rangle \langle m|)^2 \\
&+2
\Tr \Big(\rho \circ
(\sum_{m=-j}^j 
\frac{m}{p_m}|m\rangle \langle m|) \Big)
(\frac{a}{2p-1}\tilde{L}_1)\\
=&
\frac{a^2}{(2p-1)^2}\Tr \rho 
(\tilde{L}_1)^2
+b^2 
\Tr \rho 
(\sum_{m=-j}^j 
\frac{m}{p_m}|m\rangle \langle m|)^2 \\
\cong&
\frac{2n a^2}{(2p-1)^2}
+\frac{n b^2}{p(1-p)}
=
n(\frac{2 a^2}{(2p-1)^2}
+\frac{ b^2}{p(1-p)}).
\end{align}
Hence, we obtain \eqref{BNA}.

\section{Memo}

When $n=1$, we have
\begin{align}
&J_{1,1}\\
= &
\frac{4(p-(1-p))^2}{p+(1-p)}=4(2p-1)^2.
\end{align}

When $n=2$, we have
\begin{align}
&J_{1,1}\\
=&
2\frac{4(2 p(1-p)- (1-p)^2)^2}{
2 p(1-p)+ (1-p)^2}
+
2\frac{4( p^2 -2 p(1-p))^2}{
p^2 +2 p(1-p)}\\
=&
\frac{8((1-3p) (1-p))^2}{
(1+ p)(1-p)}
+
\frac{8( p(2-3p))^2}{
p(2-p)}\\
=&
\frac{8(1-3p)^2 (1-p)}{
(1+ p)}
+
\frac{8 p(2-3p)^2}{(2-p)} \\
=&
\frac{8 ((1-3p)^2 (1-p)(2-p) +(2-3p)^2 p(1+p))}{
(1+ p)(2-p)}
\\
=&
\frac{8(1-3p)^2 (2 p^2-2p+2)
}{(1+ p)(2-p)}
\\
=&
\frac{16(1-3p)^2 (p^2-p+1)
}{
(1+ p)(2-p)}
\end{align}

\if0

\begin{align}
&J_{1,1}\\
= &
\sum_{m=-j}^{j-1}  
\frac{4(p_{m+1}-p_m)^2}{p_{m+1}+p_m }
(j-m)(j+m+1) \\
= &
\sum_{m=-j}^{j-1}  
\frac{4({n \choose j+m+1} p^{j+m+1}(1-p)^{j-m-1}
-{n \choose j+m} p^{j+m}(1-p)^{j-m})^2}{
{n \choose j+m+1} p^{j+m+1}(1-p)^{j-m-1}
+{n \choose j+m} p^{j+m}(1-p)^{j-m}}\\
&\cdot(j-m)(j+m+1) \\
\end{align}

\begin{align}
\rho \circ J_1= 
\frac{1}{4} (J_+\rho+ \rho J_+ +J_-\rho +\rho J_- ) \\
i [\rho, J_2]= \frac{1}{2}[\rho, (J_+ - J_-)]= 
\frac{1}{2} (-J_+\rho+ \rho J_+ +J_-\rho -\rho J_- )
\end{align}
\fi

\section{Covariant approach}
Next, we consider the general  covariant model 
We consider the parameter space $\Theta:= \SU(2)/U(1)$, where
$U(1)$ is the one parameter group generated by $S_3$.
We denote the representation with spin $j$ system ${\cal H}_j$ by $U_j$. 
For $\theta =[g]$, we define
\begin{align}
\rho_\theta:=U_j(g)\rho U_j(g)^\dagger.
\end{align}
Using the spin-1/2 system ${\cal H}_{1/2}$, 
we employ the estimation error $1-R(\theta,\hat{\theta})$,
where $R(\theta,\hat{\theta})$ is defined as
\begin{align}
R(\theta,\hat{\theta}):=
\tr 
(g |\frac{1}{2}\rangle \langle \frac{1}{2}| g^\dagger)
(\hat{g} |\frac{1}{2}\rangle \langle \frac{1}{2}| \hat{g}^\dagger)
=|\langle \frac{1}{2}| g^\dagger \hat{g} |\frac{1}{2}\rangle|^2
\end{align}
where $\theta=[g]$, $\hat{\theta}=[\hat{g}]$.

Now, we consider the general case with $\rho:=\sum_{m=-j}^{j}  p_m|m\rangle \langle m| $.

${\cal H}_{\frac{1}{2}} \otimes {\cal H}_j=
{\cal H}_{j+\frac{1}{2}} \oplus {\cal H}_{j-\frac{1}{2}}$.
We denote the projection to 
${\cal H}_{j+\frac{1}{2}} $ and $ {\cal H}_{j-\frac{1}{2}}$
by $P_{j+\frac{1}{2}} $ and $P_{j-\frac{1}{2}} $, respectively.

\begin{theorem}[\protect{\cite[Theorem 4.6]{Group2}}]
When
\begin{align}
\frac{1}{2j+2}
\tr P_{j+\frac{1}{2}}
|\frac{1}{2}\rangle \langle \frac{1}{2}|\otimes \rho
\ge
\frac{1}{2j}
\tr P_{j-\frac{1}{2}}
|\frac{1}{2}\rangle \langle \frac{1}{2}|\otimes \rho,\label{BFY}
\end{align}
the maximum average of $R(\theta,\hat{\theta})$ is 
\begin{align}
\frac{2j+1}{2j+2}
\tr P_{j+\frac{1}{2}}
|\frac{1}{2}\rangle \langle \frac{1}{2}|\otimes \rho
\end{align}
\end{theorem}

We remember Table of Clebsch–Gordan formula.
When $2j=n$, 
\begin{align}
&\langle j+\frac{1}{2}; m+\frac{1}{2}
|\frac{1}{2},\frac{1}{2};j,m\rangle^2  
=\frac{(2j+2)(2j)! (j+m)!(j-m)!(j+m+1)!(j-m)!}
{(2j+2)! ((j+m)!  (j-m)!)^2}\\
=&
\frac{(2j+2) (j+m+1)}
{(2j+2)(2j+1)}
=
\frac{(j+m+1)}{(2j+1)}.
\end{align}

When $p_m$ is given as a binomial distribution and $2j=n$, 
the LHS of \eqref{BFY} is calculated as
\begin{align}
&\frac{1}{n+2}
\sum_{m=-j}^{j}
\frac{j+m+1}{2j+1}
{n \choose j+m} p^{j+m}(1-p)^{j-m}\\
=&
\frac{1}{n+2}\sum_{k=0}^{n}
\frac{k+1}{n+1}
{n \choose k} p^{k}(1-p)^{n-k}\\
=&
\frac{1}{n+2}\frac{np+1}{n+1}.
\end{align}
\yo{
Here is my calculation.
Now define $|H^n_k\> = \sqrt{\binom n k}|n-k,k\>_{\mathcal B}$.
When $p_m$ is given as a binomial distribution and $2j=n$, 
the LHS of \eqref{BFY} is 
\begin{align}
&\frac{1}{n+2}
\tr \sum_{u=0}^{n+1} |D^{n+1}_u\> \< D^{n+1}_u|
\left (|1\>\<1| \otimes \sum_{k=0}^n \binom n k p^k(1-p)^{n-k} |n-k,k\>_{\mathcal B}\<n-k,k|_{\mathcal B} \right)\\
=&
\frac{1}{n+2}
\tr \sum_{u=1}^{n+1} \frac{ |H^{n+1}_u\> \< H^{n+1}_u| }{\binom {n+1} u}
\left (|1\>\<1| \otimes \sum_{k=0}^n  p^k(1-p)^{n-k} |H^n_k\>\<H^n_k| \right)\\
=&
\frac{1}{n+2}
\tr \sum_{u=1}^{n+1} \frac{ ( |1\>\otimes |H^n_{u-1}\> ) (\<1| \otimes \< H^{n}_{u-1}|) }{\binom {n+1} u}
\left (|1\>\<1| \otimes \sum_{k=0}^n  p^k(1-p)^{n-k} |H^n_k\>\<H^n_k| \right)\\
=&
\frac{1}{n+2}
\tr \sum_{u=1}^{n+1} \frac{   |H^n_{u-1}\>  \< H^{n}_{u-1}| }{\binom {n+1} u}
\left  ( \sum_{k=0}^n    p^k(1-p)^{n-k}   |H^n_k\>\<H^n_k| \right)\\
=&
\frac{1}{n+2}
  \sum_{u=1}^{n+1}  \sum_{k=0}^n  \frac{  p^k(1-p)^{n-k}  \<H^n_k |H^n_{u-1}\>  \< H^{n}_{u-1}|H^n_k \> }{\binom {n+1} u} \\
  =&
\frac{1}{n+2}
     \sum_{k=0}^n  \frac{  p^k(1-p)^{n-k}  \<H^n_k |H^n_{k}\>  \< H^{n}_{k}|H^n_k \> }{\binom {n+1} u} \\
  =&
\frac{1}{n+2}
     \sum_{k=0}^n  \binom n k p^k(1-p)^{n-k} \frac{ \binom n k  }{\binom {n+1} {k+1}} \\
  =&
\frac{1}{n+2}
     \sum_{k=0}^n  \binom n k p^k(1-p)^{n-k} \frac{ k+1 }{n+1} .
\end{align}
And the RHS of \eqref{BFY} is just 1 minus the LHS of \eqref{BFY}.
}
The RHS of \eqref{BFY} is calculated as
\begin{align}
&
\frac{1}{n}\sum_{m=-j}^{j}
\frac{j-m}{2j+1}
{n \choose j+m} p^{j+m}(1-p)^{j-m}\\
=&
\frac{1}{n}\sum_{k=0}^{n}
\frac{n-k}{n+1}
{n \choose k} p^{k}(1-p)^{n-k}\\
=&
\frac{1}{n}\frac{n(1-p)}{n+1}.
\end{align}

When $n$ goes to infinity, 
the limit of $n$ times of LHS of \eqref{BFY} equals $p$
and
the limit of $n$ times of RHS of \eqref{BFY} equals $1-p$.
When $p > 1/2$, with sufficiently large $n$, 
the condition \eqref{BFY} holds.

Then, the maximum average of $R(\theta,\hat{\theta})$ is 
$(n+1)\frac{1}{n+2}\frac{np}{n+1}
=\frac{np}{n+2}$, converges to $p$.
That is, the error converges to $1-p$, which is strictly larger than $0$.
Thus, we cannot make a precise estimation even with sufficiently large $n$.

At the first glance, this discussion seems to contradict with the analysis on Section \ref{S3}.
However,  
the analysis on Section \ref{S3} does not discuss only a lower bound of 
estimation error. 
It does not discuss the global attainability of the Cram\'{e}r-Rao bound.
That is, the Cram\'{e}r-Rao bound can be attained by a local estimator, but
its global attainability needs more discussions.
Usually, we employ a two-step method, or a general adaptive method for finding 
an estimate to choose the local optimal estimator.
However, in this scenario, 
although it is possible to separate the spin-$j$ system into several subsystems, e.g., 
the spin-$\frac{n_1+n_2}{2}$ system 
into the spin-$\frac{n_1}{2}$ system
and the spin-$\frac{n_2}{2}$ system.
However, these subsystems are not independent.
Hence, we cannot apply the results for 
a two-step method \cite{two1,two2,two3,two4} nor a general adaptive method \cite{Ad1,Ad2}.

More precisely, the above fact shows that 
a two-step method nor a general adaptive method
does not work.
This is because if 
a two-step method or a general adaptive method works,
we can estimate the parameter $\theta$, which contradicts the above conclusion.
Therefore, 
Section \ref{S3} 
discusses the behavior of the information quantities, but it does not discuss
the estimation precision.

\section{Discrete random walk on a line}

For $i = 1,\dots, m$ let $X_i$ be i.i.d. discrete random variables such that 
$\Pr(X_i = 0) = p_{\diamond}$, 
$\Pr(X_i = 1) = p_{\rightarrow}$ and 
$\Pr(X_i = -1) = p_{\leftarrow}$ where 
$p_{\diamond} +  p_{\rightarrow} +  p_{\leftarrow} = 1$.
Let $S = \sum_{i=1}^m X_i$.
Then we have 
\begin{lemma}
Let $p_{\rightarrow} = 0$.
For all non-negative integers $k$ such that $ |k| \le m$, we have 
$\Pr [ \sum_{i=1}^m X_i = -|k| ] = \binom m k (1-p_{\leftarrow})^{m-|k|} p_{\leftarrow}^{|k|}$.
\end{lemma}

\fi

\section{Proof of Theorem \ref{TH2}}\label{A-F}
For reader's convenience, 
we give a proof for Theorem \ref{TH2} 
as a special case of \cite[Theorem 4.6]{Group2}.

When our estimator $\hat{\theta}$ corresponds to $g \in \SU(2)$ 
and when the true parameter is 0,
the fidelity function as given in \eqref{fidelity} can be expressed as 
\begin{align}
R(0,\hat{\theta})=
\cos^2 \frac{|\hat{\theta}|}{2}=
\Tr |\frac{1}{2};\frac{1}{2}\rangle \langle \frac{1}{2};\frac{1}{2}|
U_{1/2,g} 
|\frac{1}{2};\frac{1}{2}\rangle \langle \frac{1}{2};\frac{1}{2}|
U_{1/2,g}^\dagger 
\end{align}
where $U_{1/2,g}$ is a spin-1/2 unitary representation of $g \in \SU(2)$. 

Now, let us write the true state as $\rho$,
and write the POVM element that corresponds to $g \in \SU(2)$ as 
$(2j+1) U_{j,g} \rho' U_{j,g}^\dagger $ for some state $\rho'$. 
We consider the case when our measurement is given as $M_{\rho'}$.
Then, by using the Haar measure $\nu$ on $\SU(2)$,
the Bayesian average of $R(0,\hat{\theta})$ is calculated as
\begin{align}
&R(M_{\rho'})=
\int_{\SU(2)} 
\Tr |\frac{1}{2};\frac{1}{2}\rangle \langle \frac{1}{2};\frac{1}{2}|
U_{1/2,g} 
|\frac{1}{2};\frac{1}{2}\rangle \langle \frac{1}{2};\frac{1}{2}|
U_{1/2,g}^\dagger 
\cdot (2j+1)\Tr \rho
U_{j,g} 
\rho'
U_{j,g}^\dagger \nu(dg)\notag\\
=&
\int_{\SU(2)} 
(2j+1)\Tr 
|\frac{1}{2};\frac{1}{2}\rangle \langle \frac{1}{2};\frac{1}{2}|\otimes \rho
(U_{1/2,g} \otimes U_{j,g} )
|\frac{1}{2};\frac{1}{2}\rangle \langle \frac{1}{2};\frac{1}{2}|
\otimes \rho'
(U_{1/2,g} \otimes U_{j,g} )^\dagger \nu(dg)
\notag\\
=&
(2j+1)\Tr 
(P_{j+\frac{1}{2}}+P_{j-\frac{1}{2}})
(|\frac{1}{2};\frac{1}{2}\rangle \langle \frac{1}{2};\frac{1}{2}|\otimes \rho)
(P_{j+\frac{1}{2}}+P_{j-\frac{1}{2}})
\notag\\
&\quad \times
\int_{\SU(2)} 
(U_{1/2,g} \otimes U_{j,g} )
(|\frac{1}{2};\frac{1}{2}\rangle \langle \frac{1}{2};\frac{1}{2}|
\otimes \rho')
(U_{1/2,g} \otimes U_{j,g} )^\dagger \nu(dg).
\end{align}
Here, $P_j$ is a projector onto the space with total spin of $j$.
We obtain the above by using the properties of the trace function, which allows us to rewrite the Bayesian average as the integral of the trace of operators on a tensor product space.
In the next step we use properties of how a spin-$j$ space combines with a spin-1/2 space.
\begin{align}
R(M_{\rho'})
\stackrel{(a)}{=}&
(2j+1)\Tr 
P_{j+\frac{1}{2}}
(|\frac{1}{2};\frac{1}{2}\rangle \langle \frac{1}{2};\frac{1}{2}|\otimes \rho)
P_{j+\frac{1}{2}}
\int_{\SU(2)} 
(U_{1/2,g} \otimes U_{j,g} )
(|\frac{1}{2};\frac{1}{2}\rangle \langle \frac{1}{2};\frac{1}{2}|
\otimes \rho')
(U_{1/2,g} \otimes U_{j,g} )^\dagger \nu(dg) \notag
\\
&+
(2j+1)\Tr 
P_{j-\frac{1}{2}}
(|\frac{1}{2};\frac{1}{2}\rangle \langle \frac{1}{2};\frac{1}{2}|\otimes \rho)
P_{j-\frac{1}{2}}
\int_{\SU(2)} 
(U_{1/2,g} \otimes U_{j,g} )
(|\frac{1}{2};\frac{1}{2}\rangle \langle \frac{1}{2};\frac{1}{2}|
\otimes \rho')
(U_{1/2,g} \otimes U_{j,g} )^\dagger \nu(dg)
\end{align}
To obtain this note that since the operator 
$\int_{\SU(2)} 
(U_{1/2,g} \otimes U_{j,g} )
(|\frac{1}{2};\frac{1}{2}\rangle \langle \frac{1}{2};\frac{1}{2}|
\otimes \rho')
(U_{1/2,g} \otimes U_{j,g} )^\dagger \nu(dg)$
is commutative with $(U_{1/2,g'} \otimes U_{j,g'})$,
Schur's lemma guarantees that this operator is a sum of constant times of 
$P_{j+\frac{1}{2}}$ and $P_{j-\frac{1}{2}}$.
Hence, this operator is commutative with 
$P_{j+\frac{1}{2}}$ and $P_{j-\frac{1}{2}}$.
Hence, we obtain Step $(a)$.

Next, we proceed to evaluate the integrals and obtain
\begin{align}
R(M_{\rho'})\stackrel{(b)}{=}&
(2j+1)\Tr 
P_{j+\frac{1}{2}}
(|\frac{1}{2};\frac{1}{2}\rangle \langle \frac{1}{2};\frac{1}{2}|\otimes \rho)
\frac{1}{2j+2}
(\Tr P_{j+\frac{1}{2}} |\frac{1}{2};\frac{1}{2}\rangle \langle \frac{1}{2};\frac{1}{2}|\otimes \rho')
P_{j+\frac{1}{2}}\notag
\\
&+
(2j+1)\Tr 
P_{j-\frac{1}{2}}
(|\frac{1}{2};\frac{1}{2}\rangle \langle \frac{1}{2};\frac{1}{2}|\otimes \rho)
\frac{1}{2j}
(\Tr P_{j-\frac{1}{2}} |\frac{1}{2};\frac{1}{2}\rangle \langle \frac{1}{2};\frac{1}{2}|\otimes \rho')
P_{j-\frac{1}{2}}
\end{align}
Since the irreducibility of the spaces 
${\cal H}_{j+\frac{1}{2}}$ and
${\cal H}_{j-\frac{1}{2}}$ guarantees that 
the first and second integral terms in Step (a) equal 
constant times of 
$P_{j+\frac{1}{2}}$ and $P_{j-\frac{1}{2}}$, respectively,
we obtain Step $(b)$.

Finally, we simplify further and get
\begin{align}
R(M_{\rho'})=&
\frac{2j+1}{2j+2}
\Tr 
P_{j+\frac{1}{2}}
(|\frac{1}{2};\frac{1}{2}\rangle \langle \frac{1}{2};\frac{1}{2}|\otimes \rho)
(\Tr P_{j+\frac{1}{2}} |\frac{1}{2};\frac{1}{2}\rangle \langle \frac{1}{2};\frac{1}{2}|\otimes \rho')
\notag\\
&+
\frac{2j+1}{2j}
\Tr 
P_{j-\frac{1}{2}}
(|\frac{1}{2};\frac{1}{2}\rangle \langle \frac{1}{2};\frac{1}{2}|\otimes \rho)
(\Tr P_{j-\frac{1}{2}} |\frac{1}{2};\frac{1}{2}\rangle \langle \frac{1}{2};\frac{1}{2}|\otimes \rho')
\notag\\
\stackrel{(c)}{\le}&
\frac{2j+1}{2j+2} \Tr 
P_{j+\frac{1}{2}}
(|\frac{1}{2};\frac{1}{2}\rangle \langle \frac{1}{2};\frac{1}{2}|\otimes \rho).
\end{align}
The inequality $(c)$ follows from 
\eqref{BFY} and the relation 
$(\Tr P_{j+\frac{1}{2}} |\frac{1}{2};\frac{1}{2}\rangle \langle \frac{1}{2};\frac{1}{2}|\otimes \rho')
+
(\Tr P_{j-\frac{1}{2}} |\frac{1}{2};\frac{1}{2}\rangle \langle \frac{1}{2};\frac{1}{2}|\otimes \rho')=1$.
Further, the equality holds when $\rho'=|j;j\rangle \langle j;j|$
because 
$(\Tr P_{j+\frac{1}{2}} |\frac{1}{2};\frac{1}{2}\rangle \langle \frac{1}{2};\frac{1}{2}|\otimes |j;j\rangle \langle j;j|)
=1$.

\section{The error function for the global estimate on the unitary model with the geometric distribution}
\label{app:erfgeo}
With $M_n$ as the optimal covariant estimator for the unitary model on the geometric distribution, we have from Theorem \ref{TH2} that 
\begin{align}
&1-R(M_n)=
1-\frac{n+1}{n+2}\Big(
\frac{1}{n+1}
+
\frac{1}{n+1}
(\frac{n r^{n+1}+1}{r^{n+1}-1}-\frac{1}{r-1})
\Big)
=
1-\Big(
\frac{1}{n+2}
+
\frac{1}{n+2}
(\frac{n r^{n+1}+1}{r^{n+1}-1}-\frac{1}{r-1})
\Big)\notag\\
=&
1-\Big(
\frac{1}{n+2}
+
\frac{1}{n+2}
(\frac{n +r^{-n-1}}{1-r^{-n-1}}-\frac{1}{r-1})
\Big)%\notag\\
= 
1-\Big(
\frac{1}{n+2}
+
\frac{1}{n+2}
( n +O(r^{-n-1})-\frac{1}{r-1})
\Big)\notag\\
= &
\frac{1}{n+2}
\Big(n+2-1- (n +O(r^{-n-1})-\frac{1}{r-1})
\Big)%\notag\\
= 
\frac{1}{n+2}
\Big(1-  \frac{1}{r-1}+O(r^{-n-1})\Big)\notag\\
= &
\frac{r}{(n+2)(r-1)}+O(r^{-n-1}) % \notag\\
= 
\frac{r}{n(r-1)}
-\frac{2}{n^2(r-1)}
+O(\frac{1}{n^3})
+O(r^{-n-1}).
\end{align}
\end{widetext}

\bibliographystyle{quantum}
\bibliography{ref2}

\end{document}